\definecolor{mylightyellow}{rgb}{1,1,.8}
\definecolor{mylightgreen}{rgb}{.8,1,.8}
\definecolor{mydarkred}{RGB}{178,34,34}
\definecolor{mydarkgreen}{RGB}{34,139,34}
\definecolor{mydarkblue}{RGB}{72,61,139}
\definecolor{mydarkyellow}{RGB}{218,165,32}
\definecolor{myblueA}{RGB}{52,41,39}
\definecolor{myblueB}{RGB}{92,81,109}
\definecolor{myblueC}{RGB}{132,121,179}
\definecolor{myblueD}{RGB}{172,161,249}
\definecolor{myblueE}{RGB}{212,201,255}
\tikzstyle{nameusd} = [circle, draw, top color=white, bottom color=mydarkblue!50, draw=mydarkblue!75!black!100, drop shadow, minimum height=4em]
\tikzstyle{nameeur} = [circle, draw, top color=white, bottom color=mydarkred!50, draw=mydarkred!75!black!100, drop shadow, minimum height=4em]
\tikzstyle{collusd} = [rectangle,fill=mydarkblue!10, inner sep=0.2cm, rounded corners=5mm]
\tikzstyle{colleur} = [rectangle,fill=mydarkred!10, inner sep=0.2cm, rounded corners=5mm]
\tikzstyle{fixto} = [draw, -latex']
\tikzstyle{fixfrom} = [draw, latex'-]
\tikzstyle{floatto} = [draw, snake=coil, segment aspect=0, line before snake=2ex, line after snake=1ex, -latex']
\tikzstyle{floatfrom} = [draw, snake=coil, segment aspect=0, line before snake=2ex, line after snake=1ex, latex'-]
\tikzstyle{investor} = [rectangle, draw, top color=white, bottom color=mydarkyellow!50, draw=mydarkyellow!75!black!100, drop shadow, rounded corners, minimum height=3em, text width=4em, text centered]
\tikzstyle{trader} = [circle, draw, top color=white, bottom color=blue!30, draw=blue!50!black!100, drop shadow, minimum height=4em]
\tikzstyle{bank} = [rectangle, draw, top color=white, bottom color=red!20, draw=red!50!black!100, drop shadow, rounded corners, minimum height=3em, text width=4em, text centered]
\tikzstyle{market} = [rectangle, draw, top color=white, bottom color=green!20, draw=green!50!black!100, drop shadow, rounded corners, minimum height=3em, text width=4em, text centered]
\tikzstyle{yellowbox} = [rectangle, draw, top color=white, bottom color=mydarkyellow!50, draw=mydarkyellow!75!black!100, drop shadow, rounded corners, text centered]
\tikzstyle{fadedyellowbox} = [rectangle, draw=gray, text=gray, top color=white, bottom color=mydarkyellow!25, draw=mydarkyellow!50, drop shadow, rounded corners, text centered]
\tikzstyle{redbox} = [rectangle, draw, top color=white, bottom color=mydarkred!50, draw=mydarkred!75!black!100, drop shadow, rounded corners, text centered]
\tikzstyle{fadedredbox} = [rectangle, draw=gray, text=gray, top color=white, bottom color=mydarkred!25, draw=mydarkred!50, drop shadow, rounded corners, text centered]
\tikzstyle{bluebox} = [rectangle, draw, top color=white, bottom color=mydarkblue!50, draw=mydarkblue!75!black!100, drop shadow, rounded corners, text centered]
\tikzstyle{fadedbluebox} = [rectangle, draw=gray, text=gray, top color=white, bottom color=mydarkblue!25, draw=mydarkblue!50, drop shadow, rounded corners, text centered]
\tikzstyle{greenbox} = [rectangle, draw, top color=white, bottom color=mydarkgreen!50, draw=mydarkgreen!75!black!100, drop shadow, rounded corners, text centered]
\tikzstyle{fadedgreenbox} = [rectangle, draw=gray, text=gray, top color=white, bottom color=mydarkgreen!25, draw=mydarkgreen!50, drop shadow, rounded corners, text centered]
\newtheorem{theorem}{Theorem}[section]
\newtheorem{lemma}[theorem]{Lemma}
\newtheorem{proposition}[theorem]{Proposition}
\newcommand{\tr}[1]{{#1}^{\intercal}} 
\DeclareMathOperator{\Tr}{Tr}
\let\norm\undefined 
\DeclarePairedDelimiter\norm{\lVert}{\rVert}
\newcommand{\virgolette}[1]{``#1''}
\newcommand{\rom}[1]{\uppercase\expandafter{\romannumeral #1\relax}}
\DeclareMathOperator*{\argmax}{arg\,max}
\DeclareMathOperator*{\argmin}{arg\,min}
\newenvironment{eqsys}{\begin{equation}\begin{dcases}}{\end{dcases}\end{equation}}
\date{}
\title{Reinforcement Learning for Options on \\Target Volatility Funds}
\author{
	Roberto Daluiso\thanks{Intesa Sanpaolo Group Milano, \texttt{roberto.daluiso@intesasanpaolo.com}}
	\and 
	Emanuele Nastasi\thanks{Marketz S.p.A. Milano, \texttt{emanuele.nastasi@marketz.eu}}
	\and Andrea Pallavicini\thanks{Intesa Sanpaolo Group Milano, \texttt{andrea.pallavicini@intesasanpaolo.com}} 
	\and
	Stefano Polo\thanks{X-Numeris S.r.l. Milano, \texttt{stefano.polo@xnumeris.com}} 
}
\begin{document} 

\maketitle

\begin{abstract}
\footnotesize	
\noindent
In this work we deal with the funding costs rising from hedging the risky securities underlying a target volatility strategy (TVS), a portfolio of risky assets and a risk-free one dynamically rebalanced in order to keep the realized volatility of the portfolio on a certain level. The uncertainty in the TVS risky portfolio composition along with the difference in hedging costs for each component requires to solve a control problem to evaluate the option prices. We derive an analytical solution of the problem in the Black and Scholes (BS) scenario. Then we use Reinforcement Learning (RL) techniques to determine the fund composition leading to the most conservative price under the local volatility (LV) model, for which an a priori solution is not available. We show how the performances of the RL agents are compatible with those obtained by applying path-wise the BS analytical strategy to the TVS dynamics, which therefore appears competitive also in the LV scenario.
\end{abstract}

\smallskip
\noindent
\textbf{JEL classification codes:} C63, C45, G13.\\
\textbf{AMS classification codes:} 65C05, 68T07, 91G20.\\
\textbf{Keywords:} Reinforcement Learning, Hedging Costs, Target Volatility, Asset Allocation, Stochastic Optimal Control Problem. 
\newpage
\tableofcontents
 \mbox{}
 \vfill
 \noindent \small The opinions here expressed are solely those of the authors and do not represent in any way those of their employers.

\newpage
\pagestyle{myheadings} \markboth{}{{\footnotesize Daluiso, Pallavicini, Nastasi, Polo, Reinforcement Learning for Options on Target Volatility Funds}}

\section{Introduction}
In the recent years portfolio managers are exposed to very low interest rates and quickly changing market volatilities. An effective solution to control risks under such an environment is given by target volatility strategies (TVSs) (also known as constant volatility targeting) which are able to preserve the portfolio at a predetermined level of volatility. A TVS is a portfolio of risky assets (typically  equities) and a risk-free asset dynamically re-balanced with the aim of maintaining the overall portfolio volatility level closed to some target value. The constant volatility approach can help investors to obtain desired risk exposures over the short and long term and it increases the risk-adjusted performance of the portfolio.

These products were initially offered in the Asian markets, see for instance the reports of \textcite{Chew2011} and \textcite{Xue2012}, which highlight the pros and cons for investors, to be adopted in the following years in many other markets in North America and Europe as depicted in \textcite{Morrison2013}. At the present day we can observe some new market indices based on the mechanism of the target volatility strategy such as Dow Jones Volatility Control Index, and S\&P 500 Risk Control Index.

In the recent literature TVSs are tested to investigate their performances in term of realized returns, see for instance \textcite{Hocquard2013} and \textcite{Perchet2016}, and the soundness of the volatility targeting algorithm, as described in \textcite{Kim2018}. Moreover, in the present day, in the derivative market new structured financial products that see TVSs as underlying are becoming widely used by practitioners. Those derivative instruments are known as target volatility options (TVOs) and have gained great interest in the derivative pricing literature, see in particular \textcite{DiGraziano2012}, \textcite{Grasselli2016}, \textcite{Albeverio2019} ad \textcite{DiPersio2019}. 

This paper is the first attempt, to the best of our knowledge, to analyze the funding costs coming from hedging the risky assets underlying the target volatility strategy. In particular, we consider the point of view of a bank selling a call option to a portfolio manager as protection on the capital invested in a TVS. The portfolio manager has the freedom of changing the relative weights of the risky assets during the life of the TVS. Since the risky assets have different hedging costs, the bank shall adjust the price of the protection to include them in the worst-case scenario, i.e. the most expensive strategy in term of the financing costs. Hence, the pricing problem becomes a continuous dynamical control problem over the risky portfolio composition. In our work we provide to the reader a formal description of the stochastic control problem and most importantly we derive an analytical solution assuming that the risky asset dynamics underlying the TVS  follow a Black and Scholes (BS) model \citep{BlackScholes1953} and that the derivative contract is a European-style option. 

Although from a theoretical point of view this solution represents a sub-optimal strategy when dealing with generic dynamics for the risky assets, we show that the BS solution provides a strong baseline as price to practitioners when testing it against numerical methods with different levels of sophistication for solving the control problem. More precisely we numerically study the problem in the general case of a local volatility dynamics \citep{Dupire1994,Derman1994} by adopting a Reinforcement Learning approach, in particular an \textit{ad hoc} vanilla direct policy algorithm and the more sophisticated proximal policy optimization technique developed in \textcite{Schulman2017}. 
\\

The paper is organized as follows. In \Cref{sec:TVS} we introduce more in detail the manager-bank contract and provide the description of the TVS dynamics in presence of valuation adjustments such as the hedging costs. Then, in \Cref{sec:Derivative} we introduce the structured class of derivative contracts linked to TVSs, where we describe the arising dynamical control problem for pricing those options. Moreover in this section we derive the BS closed solution for European TVOs in two different ways: one applying the Gy\"ongy Lemma and the other by writing the Hamilton-Jacobi-Bellman equation. In \Cref{sec:RL} we illustrate how we have applied RL to solve the dynamic control problem, giving a description of the algorithm we have built. We conclude the paper with \Cref{sec:Numerical_investigations} where we present the numerical results obtained in this work for the Black and Scholes model and the local volatility one.
\\

A part of this work has been developed during the Master thesis of one of us (SP), where we had the opportunity to collaborate with Marco Bianchetti and Diego Pierluigi Giovannini from Intesa Sanpaolo Milan, that we tank. Moreover we wish to thank the Italian computing centre Cineca, which approved our ISCRA C project and provided us with the high-computing resources of Marconi100 for the numerical simulations of this contribution.

\section{Target Volatility Strategy}\label{sec:TVS}
As mentioned before, this work aims to enrich the TVS pricing literature by studying the aspect related to the funding costs coming from hedging the risky assets underlying a target volatility portfolio. Thus we consider the following scenario: a bank selling a protection to a portfolio manager who has his capital invested in a TVS. In our case the fund manager has the freedom of changing the relative weights of the risky asset during the life of the TVS; once the allocation strategy is selected then the volatility targeting algorithm rebalances the risky component of the portfolio with the risk-free one in order to keep the overall portfolio volatility close to a target value. Clients investing in the fund pay a running fee for the service of the fund manager and their capital is protected. 

The fund manager usually buys from a bank an option on the TVS to ensure capital protection. For instance, the capital can be protected by buying a put option. In this case, we can write the net asset value (NAV) $A_t$ of the strategy as given by\footnote{Here we neglect discounting factors.}
\begin{equation}
    A_t \coloneqq \max\{V_t,K\} = V_t + (K-V_t)^+,
\end{equation}
where $V_t$ is the price process of the strategy, and $K$ is the guaranteed capital. On the other hand, the fund manager can replicate the payoff by means of the put-call parity by investing the capital in a low-risk asset and buying a call on the strategy
\begin{equation}
    A_t = K + (V_t-K)^+.
\end{equation}
In this way, the TVS is only defined in the two contracts client-fund and fund-bank. The fund manager is not implementing the strategy by trading in the market, and he is not subject to additional costs to access the market. On the other way, the bank is paying such costs since she is actively hedging the call option sold to the manager.  The bank trading activity implemented to actively hedge the option requires funding the collateral procedures of the hedging instruments along with any lending/borrowing fee.
We remark that the choices of the manager trading activity are stochastic processes since they will depend on the market evolution. Thus, neither the manager strategy nor the bank financing costs for hedging the option can be written in the fund-bank contract. For this reason, the price of a financial product sold by the bank is adjusted to include any valuation adjustment due to the trading activity.

Our aim is to find the most expensive investing strategy from the point of view of financing costs for the bank that the manager could choose in the market. In other words we want to determine the worst-case scenario for the bank. In this section we proceed by defining the price process of the TVS so that we can highlight the impact of valuation adjustments.

\subsection{The strategy Price Process}
We work on a filtered probability space $\left(\Omega, \mathcal{F}, \{\mathcal{F}_t\}_{t\geq0}, \mathbb{P}  \right)$ satisfying the usual assumptions for a market model, where $\mathbb{P}$ is the physical probability measure representing the actual distribution of supply and demand shocks on equities prices. 

We consider a fund trading a basket of $n$ risky securities with price process $S_t = (S_t^i, \, i=1,\dots,n)$ with $t\in I \subset \mathbb{R}$ funded with a cash account $B_t$ accruing at $r_t$. Any dividend paid by the securities is re-invested in the fund. Here, we assume that the TVS is implemented in continuous time, even if in the practice we can implement the strategy only on a discrete set of dates. We introduce the deflated gain process $\Bar{G}_t^i$ associated with the risky securities as given by
\begin{equation}
    \Bar{G}_t^i \coloneqq \Bar{S}_t^i + \Bar{D}_t^i,
\end{equation}
where we define the deflated price and cumulative dividend processes as 
\begin{equation}
    \bar{S}_t^i \coloneqq \frac{S_t^i}{B_t}, \quad \bar{D}^i_t \coloneqq \int_0^t\frac{d\pi_u^i}{B_u} + \int_0^t\frac{d\psi_u^i}{B_u},
\end{equation}
where $\pi^i_t$ represents the cumulative contractual-coupon process paid by the security, and $\psi_t^i$ represents the cumulative valuation adjustments.
Since fund managers allocating TVS usually rely on equity assets, here we use the results of \textcite{Gabrielli2020} who analyze the valuation adjustments for equity products. We can write 
\begin{equation}
    \psi_t^i \coloneqq \int_0^t S_u^i\mu_u^i du,
\label{eq:XVA_equity}\end{equation}
where we call $\mu_t^i$ cost of carry, which basically represents the hedging costs for the $i$-th security.

Then, we introduce the strategy price process $V_t$, and we define the deflated gain process $\Bar{G}_t^V$ as given by
\begin{equation}
    \Bar{G}_t^V \coloneqq \frac{V_t}{B_t} + \int_0^t\frac{V_u\phi_u}{B_u}du,
\end{equation}
where $\phi_t$ are the running fees earned by the fund manager for his activity. We assume that the strategy is self-financing, so that we can write
\begin{equation}
    d\Bar{G}_t^V = q_t \cdot d\bar{G}_t,
\label{eq:self_financing}\end{equation}
where $q_t^i$ is the quantity invested in the $i$-th security\footnote{In all formulae we use dot notation for scalar product between vectors, i.e. $a \cdot b = \sum_i a_i b_i$, or between matrix and vector, i.e. $A \cdot b = \sum_j a_{ij}b_j$ or $b \cdot A = \sum_i b_i a_{ij}$.}.

Now, in order to prevent arbitrages, we assume the existence of a risk-neutral measure $\mathbb{Q}$ equivalent to $\mathbb{P}$ under which the deflated gain processes of all traded securities are martingales. Under this assumption, we are able to derive the drift conditions on the security price processes, and in turn on the strategy price process.
\begin{equation}
    \forall u >t \quad \bar{G}_t^i = \mathbb{E}_t \left[\bar{G}_u^i\right] \quad\Longrightarrow \quad dS_t^i = r_tS_t^idt-d\pi_t^i-d\psi_t^i + dM_t^i,
\label{eq:risk_neutral}\end{equation}
where $M_t^i$ are martingale under $\mathbb{Q}$. If we substitute this expression for the security dynamics into the definition of the strategy we can check that the price process of the strategy is accruing at a cash account rate $r_t$ compensated for the fund manager fees
\begin{equation}
    dV_t = V_t(r_t-\phi_t)dt + dM_t^V,
\end{equation}
with $M_t^V$ martingale under $\mathbb{Q}$. Notice that, as expected from non-arbitrage considerations, the coupons paid by each security appear only in the drift of the security price process, but they do not impact the drift of the strategy. 

Yet, the strategy priced by $V_t$ cannot be described in the contract between the parties, since \Cref{eq:self_financing} depends via the security gain processes on the valuation adjustment $\psi_t^i$, which is specific to the investor pricing the strategy. Thus, the TVS defined in the contract will be
\begin{equation}
    d\bar{I}_t \coloneqq q_t \cdot \left(d\bar{S}_t + \frac{d\pi_t}{B_t} \right) - \bar{I}_t\phi_t dt \quad \text{with }I_0 = V_0, 
\label{eq:TVS_first}\end{equation}
leading to the following price process dynamics
\begin{equation}
    dI_t = I_t(r_t-\phi_t)dt -q_t\cdot d\psi_t +dM_t^I,
\end{equation}
with $M_t^I$ martingale under $\mathbb{Q}$. In this case we observe that $I_t$ depends explicitly both on the valuation adjustments and on the allocation strategy. Indeed, if we substitute the valuation adjustments with their explicit expression (\Cref{eq:XVA_equity}), we get
\begin{equation}
    dI_t = I_t(r_t-\phi_t)dt - q_t \cdot S_t \mu_t dt + dM_t^I,
\end{equation}
where we can see the dependency on cost of carry $\mu_t^i$ and on the allocation strategy.

\subsection{The Volatility Targeting Constraint}
In a typical TVS, the fund manager selects a risky-asset portfolio with a specific time-dependent allocation strategy expressed by means of the vector of relative weights $\alpha_t$, along with a risk-free asset, which we can identify with the bank account $B_t$. In the following, for sake of simplicity in exposition we consider only total-return securities, namely we set 8$\pi_t=0$. Thus we can write \Cref{eq:TVS_first} as given by
\begin{equation}
    \frac{dI_t}{I_t}= \omega_t \alpha_t \cdot \frac{dS_t}{S_t} + \left(1-\omega_t  \alpha_t \cdot \mathds{1} \right)\frac{dB_t}{B_t}- \phi_t dt,
\label{eq:TVS_elegant}\end{equation}
where $\mathds{1}$ is a $n$-dimensional vector of ones and $\omega_t\in[0,1]$ is determined so that the strategy log-normal volatility is kept constant, namely
\begin{equation}
        \omega_t: \quad \mathrm{Var}_t[dI_t] = \bar{\sigma}^2I_t^2dt,
\end{equation}
where $\Bar{\sigma}$ is the target volatility value.
In practice, this means that the fund manager will select a risky-portfolio choosing $\alpha_t$ equities from the universe where he can trade and after that, his choices will be scaled by the automatic target volatility algorithm\footnote{We recall that the universe of assets where the manager can trade and the value of $\bar{\sigma}$ are written in the contract.} $\omega_t$.

To derive the expression for $\omega_t$ we need to assume a generic continuous semi-martingales dynamics under the risk-neutral measure for the underlying securities, so that we can write \Cref{eq:risk_neutral} as 
\begin{equation}
    \frac{dS_t^i}{S_t^i} = \left(r_t - \mu_t^i \right)dt + \nu_t^i \cdot dW_t,
\label{eq:Equity_process}\end{equation}
where $\nu_t$ is an adapted matrix process ensuring the existence of a solution for the stochastic differential equation (SDE) and $W_t$ is a $n$-dimensional vector of Brownian motions under $\mathbb{Q}$. Under these assumptions we can derive an expression for $\omega_t$, and we get\footnote{In all formulae the norm for a vector $a$ is defined as $\|a\|\coloneqq\sqrt{a\cdot a}$.}
\begin{equation}
    \omega_t = \frac{\Bar{\sigma}}{\|\alpha_t\cdot \nu_t \|}.
\end{equation}
Hence, putting this last result in the dynamics of $I_t$ we obtain

\begin{equation}
     \frac{dI_t}{I_t} = \left(r_t -  \phi_t     - \frac{\bar{\sigma} \alpha_t}{\|\alpha_t \cdot \nu_t \|}  \cdot \mu_t \right)dt+ \frac{\bar{\sigma}\alpha_t }{\|\alpha_t \cdot \nu_t \|} \cdot \nu_t \cdot dW_t,
\label{eq:TVS_last}\end{equation}
where we can see, as expected, that the strategy grows at the risk-free rate but for adjustments due to valuation adjustments and fees.

We highlight that, to derive the dynamics expressed in \Cref{eq:TVS_last}, we have not made any assumptions on the risky allocation strategy; thus all this argument is valid for any constraints on the process $\alpha_t$.

\section{Derivative Pricing}\label{sec:Derivative}
In this section, we analyze derivative contracts linked to the TVS described previously. In particular we will focus on European style options, and we will show that under appropriate assumptions it is possible to find a closed form solution for the optimal allocation strategy which maximizes the contract price.

In a general framework, a derivative contract on the TVS with maturity $T$ can be defined as
\begin{equation}
    V_0 \coloneqq \sup_\alpha \mathbb{E}_0\left[\int_0^T D(0,u;\zeta_u)d\pi_u(\alpha)\right],
\end{equation}
where $D(0,T;\zeta_t)$ is the discount factor with rate $\zeta_t$, inclusive of the derivative valuation adjustments, and $\pi_t$ is the cumulative coupon process paid by the derivative, and it depends on the allocation strategy since in turn the TVS depends on it via the valuation adjustments. We take the supremum over the strategies since we do not have any information on the future activity of the fund manager.

\subsection{European Options}
If the derivative contract depends only on the marginal distribution of $I_T$ at maturity (a European payoff), we are able to prove that exists an optimal strategy, and we are able to calculate it. We consider the following pricing problem
\begin{equation}
    V_0 \coloneqq \sup_\alpha \mathbb{E}_0\left[D(0,T;\zeta)\Phi(I_T(\alpha))\right],
\end{equation}
where $\Phi$ is the payoff function of the derivative. 

We start by introducing the Markovian projection of the dynamics followed by $I_t$. We name it $I_t^{\text{MP}}$, and we get by applying the Gy\"ongy Lemma \citep{Gyongy1986}
\begin{equation}
    \frac{dI_t^{\text{MP}}}{I_t^{\text{MP}}} \coloneqq \left(r_t - \ell_{\alpha} \left(t,I_t^{\text{MP}}\right)\right)dt + \bar{\sigma}dW_t^{\text{MP}} \quad \text{with} \quad I_0^\text{MP} = I_0,
\label{eq:markovian_projection}\end{equation}
where the local drift is defined as
\begin{equation}
    \ell_\alpha \left(t,K\right) \coloneqq \bar{\sigma} \mathbb{E}_0\left[\frac{\mu_t \cdot \alpha_t}{\|\alpha_t \cdot \nu_t \|}\bigg|I_t = K\right],
\label{eq:local_drift}\end{equation}
and $W_t^{\text{MP}}$ is a Brownian motion under the risk-neutral measure $\mathbb{Q}$. Notice that the diffusion coefficient collapses to the target volatility value $\bar{\sigma}$. Since European payoffs depend only on the marginal distribution at maturity, they can be calculated only by means of the Markovian projection $I_t^{\text{MP}}$, namely
\begin{equation}
V_0 \coloneqq \sup_{\alpha} \mathbb{E}_0\left[D\left(0,T;\zeta\right)\Phi\left(I_T^{\text{MP}}\left(\alpha\right)\right)\right].
\end{equation}
Hence, we have our first result valid only if valuation adjustments can be neglected:
\begin{proposition}
A European payoff on the TVS can be calculated by assuming any allocation in the underlying risky basket if all the underlying
securities grow under the risk-neutral measure at the risk-free rate without any valuation adjustment, namely if we can write $\mu_t=0$.
\end{proposition}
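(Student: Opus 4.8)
The plan is to obtain the statement as an immediate corollary of the Markovian projection derived above, by observing that $\mu_t=0$ annihilates the only $\alpha$-dependent term in \Cref{eq:markovian_projection}. First I would substitute $\mu_t\equiv 0$ into the definition of the local drift \Cref{eq:local_drift}: the random variable $\mu_t\cdot\alpha_t/\|\alpha_t\cdot\nu_t\|$ is then identically zero along every admissible allocation $\alpha$ (admissible meaning $\|\alpha_t\cdot\nu_t\|>0$, so that $\omega_t$ and the dynamics \Cref{eq:TVS_last} are well posed), hence any version of the conditional expectation is the null function and
\begin{equation}
\ell_\alpha(t,K)\equiv 0\qquad\text{for all }t\in[0,T],\ K>0,\ \text{and all }\alpha.
\end{equation}

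Next I would plug this into \Cref{eq:markovian_projection}, which then collapses to
\begin{equation}
\frac{dI_t^{\text{MP}}}{I_t^{\text{MP}}}=r_t\,dt+\bar\sigma\,dW_t^{\text{MP}},\qquad I_0^{\text{MP}}=I_0,
\end{equation}
an SDE whose coefficients no longer depend on $\alpha$. Consequently the law of the process $I^{\text{MP}}$, and in particular the marginal law of $I_T^{\text{MP}}$, is the same for every allocation strategy. Since a European payoff depends only on this marginal law — this is exactly why, after the Gy\"ongy step, $V_0=\sup_\alpha\mathbb{E}_0[D(0,T;\zeta)\Phi(I_T^{\text{MP}}(\alpha))]$ — and since the discount factor $D(0,T;\zeta)$ carries only the derivative's own valuation adjustments and is therefore independent of the manager's allocation, the map $\alpha\mapsto\mathbb{E}_0[D(0,T;\zeta)\Phi(I_T^{\text{MP}}(\alpha))]$ is constant. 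Its supremum is thus attained by every admissible $\alpha$, so $V_0$ may be computed with any fixed allocation, which is the claim.

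No step here is genuinely difficult: the proposition is a direct read-off from the already-established Markovian projection, the substantive work having been done in setting up \Cref{eq:markovian_projection}--\Cref{eq:local_drift}. The only points that merit a line of justification are (i) the well-posedness/admissibility condition $\|\alpha_t\cdot\nu_t\|>0$ needed for the conditional expectation in \Cref{eq:local_drift} to be meaningful, and (ii) the observation that the claim is about the \emph{price} only: even with $\mu_t=0$ the path of $I_t$ still depends on $\alpha$ through the Brownian direction $\alpha_t\cdot\nu_t/\|\alpha_t\cdot\nu_t\|$ in \Cref{eq:TVS_last}, but its terminal marginal — all that a European payoff sees — does not.
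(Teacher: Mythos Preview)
Your proposal is correct and matches the paper's approach exactly: the proposition is stated as an immediate consequence of the Markovian projection \eqref{eq:markovian_projection}--\eqref{eq:local_drift}, with the observation that $\mu_t=0$ kills the only $\alpha$-dependent coefficient. Your added remarks on the admissibility condition $\|\alpha_t\cdot\nu_t\|>0$ and on the distinction between path-dependence and marginal-dependence are helpful clarifications that the paper leaves implicit.
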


\subsection{Stochastic Optimal Control Problem}
In presence of valuation adjustments, we need to solve the full optimization problem. We discretize the optimal strategy $\alpha_t$ as\footnote{We use the symbol $\mathbf{1}_A$ for the indicator function of a subset $A$.}
\begin{equation}
    \alpha_t \coloneqq \sum_k \mathbf{1}_{ \{t \in [T_{k-1}, T_k)\}}\alpha_{T_{k-1}},
\label{eq:piecewise_strategy}\end{equation}
according to a time grid $\mathcal{T}\coloneqq \{T_0,...,T_k,...,T_m\}$ with $T_0\coloneqq t$ the pricing date and $T_m \coloneqq T$ the maturity of the option. 
Therefore we can apply the dynamic programming principle to express the optimal $\alpha_t$ at time $T_{k-1}$ as
\begin{equation}
    \alpha_{T_{k-1}}\coloneqq \argmax_{\alpha} \left\{\mathbb{E}_{T_{k-1}}\left[D\left(T_{k-1},T_{k} \right) V_{T_{k}}\left(X_{T_{k}}, I_{T_{k}}(\alpha)\right) \mid X_{T_{k-1}}, I_{T_{k-1}}\right]\right\},
\label{eq:recursion}\end{equation}
where $V_{T_k}$ is the option value at time $T_k$ and $X$ is any Markovian state such that the drift and the diffusion coefficient of $I_t$ are a function of $\left(X_t,I_t,\alpha_t\right)$. We calculate $I_{T_k}\left(\alpha_{T_{k-1}}\right)$ for any given strategy $\alpha_{T_{k-1}}$ by a suitable discretization of \eqref{eq:TVS_last} starting from $X_{T_{k-1}}$ and $I_{T_{k-1}}$.

Thus the derivative price is given by:
\begin{equation}
    V_{T_{k-1}}\left(X_{T_{k-1}}, I_{T_{k-1}}\right)=\mathbb{E}_{T_{k-1}}\left[D\left(T_{k-1},T_{k}\right) V_{T_{k}}\left(X_{T_{k}}, I_{T_{k}}\left(\alpha_{T_{k-1}}\right)\right) \mid X_{T_{k-1}}, I_{T_{k-1}}\right],
\label{eq:recursion2}\end{equation}
while the iteration starts from maturity date where the boundary condition is set equal to the payoff function:
\begin{equation}
    V_{T_m} = \Phi\left(I_{T_m}\right).
\end{equation}

\subsection{Black and Scholes Model}\label{subsec:BS_model}
In the time-dependent Black and Scholes model with deterministic rates, we can work with empty $X_t$, since in this case the portfolio dynamics \eqref{eq:TVS_last} is Markovian, leading to an optimal strategy $\alpha_t^*$ which depends in principle only on $I_t$. As a consequence, the local drift defined in \Cref{eq:local_drift} can be written as
\begin{equation}
     \ell_\alpha \left(t,K\right) = \bar{\sigma}   \frac{\mu\left(t\right) \cdot \alpha\left(t,K\right)}{\|\alpha\left(t,K\right) \cdot \nu\left(t\right)\|},  
\end{equation}
so that the optimization problem can be solved by looking only at the Markovian projection without simulating all the Brownian motions $W_t$. Notice that we are indicating the dependency on time in parenthesis to highlight that in this formula all the quantities are deterministic functions of time.

A direct consequence is the following proposition, which is relevant for plain vanilla options on TVS.

\begin{proposition}
When the underlying securities follow a Black and Scholes model with deterministic rates, the optimal strategy for a non-decreasing European payoff consists in minimizing the local drift function, independently of the current state $I_t$
\begin{equation}
    \alpha^*(t) \coloneqq \argmin_\alpha \frac{\alpha \cdot \mu(t)}{\|\alpha \cdot \nu(t) \|}.
\label{eq:BS_optimal_strategy}
\end{equation}
Analogously, the optimal strategy for a non-increasing European payoff consists in maximizing the local drift function:
\begin{equation}
    \alpha^*(t) \coloneqq \argmax_\alpha \frac{\alpha \cdot \mu(t)}{\|\alpha \cdot \nu(t) \|}.
\label{eq:BS_optimal_strategy_put}
\end{equation}
\end{proposition}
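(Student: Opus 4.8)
The plan is to reduce the problem to a one-dimensional pathwise comparison carried out on the Markovian projection \eqref{eq:markovian_projection}. Since a European payoff depends only on the law of $I_T$, the discussion preceding the statement already gives $V_0=\sup_\alpha\mathbb{E}_0[D(0,T;\zeta)\Phi(I^{\text{MP}}_T(\alpha))]$, and in the Black--Scholes model with deterministic rates the projected SDE \eqref{eq:markovian_projection} has the \emph{control-independent, deterministic} diffusion coefficient $\bar\sigma$, with local drift $\ell_\alpha$ as in \eqref{eq:local_drift}. This structural fact is what makes the argument go through.

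First I would prove a pointwise lower bound on the local drift. Because $\mu(t)$ and $\nu(t)$ are deterministic, for every admissible control and $\mathbb{P}$-a.s.\ one has $\frac{\alpha_t\cdot\mu(t)}{\|\alpha_t\cdot\nu(t)\|}\ge\min_{\tilde\alpha}\frac{\tilde\alpha\cdot\mu(t)}{\|\tilde\alpha\cdot\nu(t)\|}$, the right-hand side being a deterministic function of $t$; write $\ell^*(t):=\bar\sigma\min_{\tilde\alpha}\frac{\tilde\alpha\cdot\mu(t)}{\|\tilde\alpha\cdot\nu(t)\|}$. Taking conditional expectation in \eqref{eq:local_drift} and using monotonicity of conditional expectation gives $\ell_\alpha(t,K)\ge\ell^*(t)$ for all $K$ and all admissible $\alpha$. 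I would also record here that the extremum is attained by a measurable selection $t\mapsto\alpha^*(t)$: the map $\alpha\mapsto\frac{\alpha\cdot\mu(t)}{\|\alpha\cdot\nu(t)\|}$ is positively homogeneous of degree zero and, under the standing assumptions, attains its extrema over the admissible set; and plugging this deterministic $\alpha^*(\cdot)$ into \eqref{eq:local_drift} yields $\ell_{\alpha^*}(t,\cdot)\equiv\ell^*(t)$, so that $I^{\text{MP}}(\alpha^*)$ is a geometric Brownian motion with the explicit deterministic drift $r_t-\ell^*(t)$.

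Then comes the comparison. Fix an admissible $\alpha$ and let $J_t:=I^{\text{MP}}_t(\alpha)$ be a (weak) solution of \eqref{eq:markovian_projection} on a space carrying its driving Brownian motion $W^{\text{MP}}$; on the same space define $J^*_t$ as the closed-form solution of \eqref{eq:markovian_projection} with drift $r_t-\ell^*(t)$ and the same $W^{\text{MP}}$, so that $J^*$ has the law of $I^{\text{MP}}(\alpha^*)$. Subtracting the logarithmic dynamics, the $\bar\sigma\,dW^{\text{MP}}$ terms cancel exactly, leaving
\begin{equation*}
\log J^*_T-\log J_T=\int_0^T\bigl(\ell_\alpha(t,J_t)-\ell^*(t)\bigr)\,dt\ge 0
\end{equation*}
by the pointwise bound, hence $J_T\le J^*_T$ $\mathbb{P}$-a.s. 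For a non-decreasing $\Phi$ and $D(0,T;\zeta)\ge0$ this gives $D(0,T;\zeta)\,\Phi(J_T)\le D(0,T;\zeta)\,\Phi(J^*_T)$ pathwise, and taking expectations, $\mathbb{E}_0[D(0,T;\zeta)\Phi(I^{\text{MP}}_T(\alpha))]\le\mathbb{E}_0[D(0,T;\zeta)\Phi(I^{\text{MP}}_T(\alpha^*))]$. As $\alpha$ was arbitrary, the supremum defining $V_0$ is attained at $\alpha^*$, which is \eqref{eq:BS_optimal_strategy}. The non-increasing case \eqref{eq:BS_optimal_strategy_put} follows verbatim after replacing $\min$ by $\max$ in the definition of $\ell^*$: this flips the sign of the integral above, so $J_T\ge J^*_T$, and composing with a non-increasing $\Phi$ restores the inequality in the right direction.

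I expect the only genuinely delicate point to be the recognition that the comparison \emph{must} be run on the Markovian projection and not on $I_t$ itself. In \eqref{eq:TVS_last} the controllable drift term $\bar\sigma\,\alpha_t\cdot\mu_t/\|\alpha_t\cdot\nu_t\|$ is, for a general adapted control, correlated with the Brownian increments through the feedback in $\alpha_t$, so one cannot simply ``shift the distribution'' of $I_T$; it is precisely the Gy\"ongy projection that averages out this correlation and collapses the diffusion to the deterministic $\bar\sigma$, after which the martingale parts cancel and only the scalar drift inequality $\ell_\alpha\ge\ell^*$ survives. The remaining items — existence of an admissible measurable minimiser $\alpha^*(\cdot)$, and integrability of $\Phi(I^{\text{MP}}_T)$ to justify passing to expectations — are routine under the model's standing assumptions.
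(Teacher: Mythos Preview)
Your proof is correct and follows the paper's primary route via the Markovian projection: the paper reduces to \eqref{eq:markovian_projection}, observes that in the Black--Scholes case the local drift collapses to a deterministic function of $t$, and then presents the proposition as a ``direct consequence'' without writing out the comparison step that you make explicit through the coupling of $J$ and $J^*$ under a common $W^{\text{MP}}$. The paper also rederives the result in \Cref{subsec:HJB} from the HJB equation, where monotonicity of $V$ in $I$ (inferred from the payoff's monotonicity and the homogeneity of the SDE) isolates the same $\argmin$; your argument is the fully worked-out version of the first route and simply bypasses the PDE machinery of the second.
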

The absence of stochastic elements in \Cref{eq:BS_optimal_strategy} makes the optimal strategy known \textit{a priori} with no simulation needed; in fact one can solve the optimization problem once for all $\forall t \in \mathcal{T}$ just looking at the market data $\mu(t)$ and $\nu(t)$ for the securities. Once $\alpha^*$ is known, then one can price the payoff by the following BS formula
\begin{equation}
	V_0^{\text{BS}} = BS(F^{\text{TVS}}(0,T;\alpha^*), K, T, \bar{\sigma}, D(0,T;\zeta)),
\label{eq:bs_closed_solution}\end{equation}
where $F^{\text{TVS}}(t,T;\alpha)$ is the TVS forward curve defined by
\begin{equation}
	F^{\text{TVS}}(t,T;\alpha) = I_t \exp\left[\int_t^T\left(r(u)-\ell_{\alpha} \left(u\right)\right)du\right],
\label{eq:tvs_forward}\end{equation}
while $BS(F,K,T,\sigma,D)$ is the standard BS formula for a European option with forward curve $F$, strike $K$, time to maturity $T$, volatility $\sigma$ and discount factor $D$. 

In \Cref{fig:strategy_price_comparison} we provide a comparison among the option price obtained with the optimal stragety (BS$^*$) of \Cref{eq:BS_optimal_strategy} and those with other intutive strategies (S$_A$, S$_B$, S$_C$). Here we consider the case of an at-the-money call option with spot $I_0 = 1\, \text{EUR}$, maturity $T=5 \, \text{yr}$, target volatility $\bar{\sigma}=5\%$, and a nonnegative constraint on $\alpha$. The intuitive strategies are
\begin{itemize}
\item S$_A$: invest all in the asset with the maximum forward curve at maturity;
\item S$_B$: for each market pillar before maturity invest all in the asset with minimum $\mu(t)$; 
\item S$_C$: for each market pillar before maturity invest all in the asset with minimum $\mu(t)/\|\nu(t)\|$.
\end{itemize}
As the reader can observe, our strategy outperforms any other intuitive one a practitioner might adopt.
\begin{figure}[t]
	\centering
	\includegraphics[height=5cm,width=10cm]{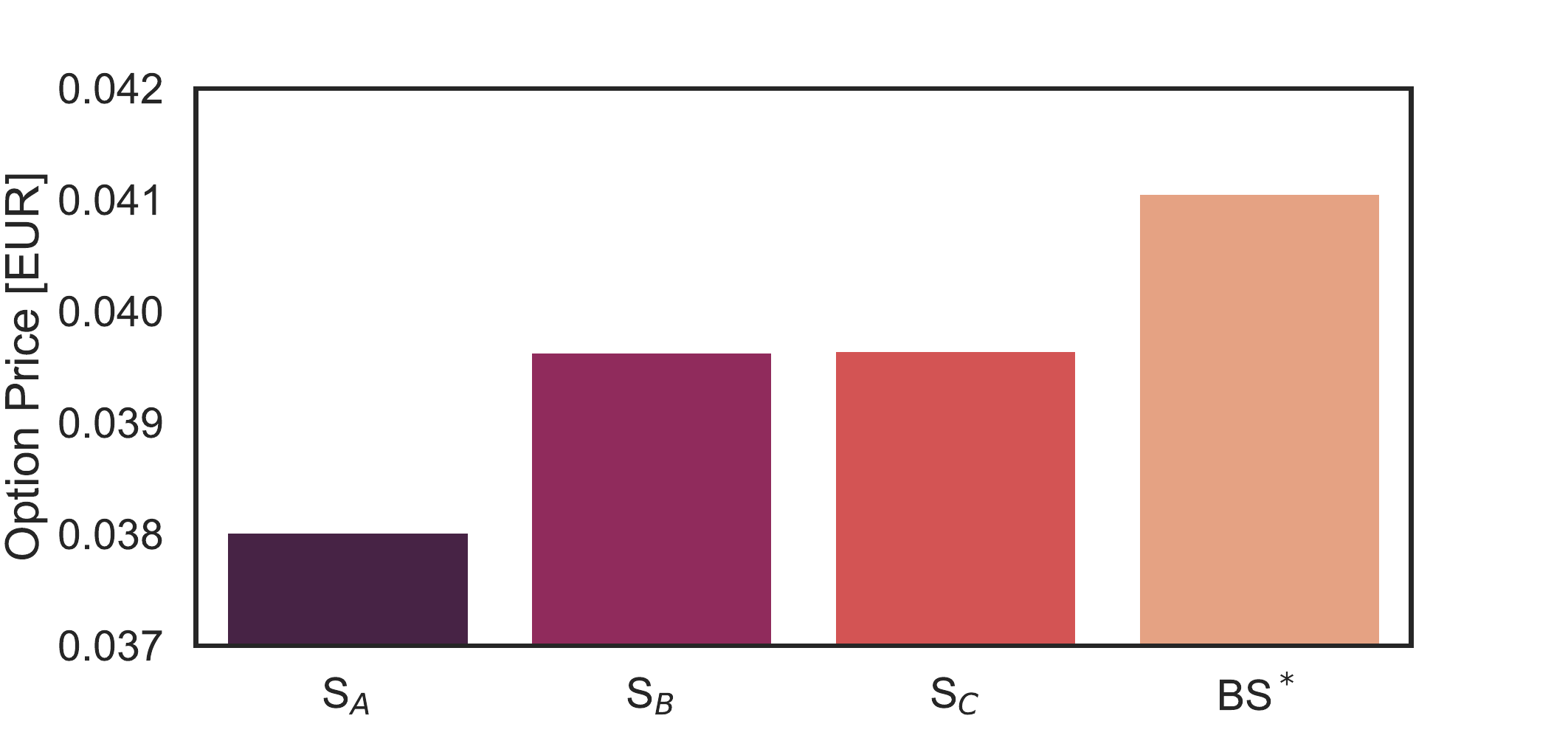}
	\caption{Comparison of plain vanilla prices on the TVS by adopting different allocation strategies: the optimal Black and Scholes solution (BS$^*$) of \Cref{eq:BS_optimal_strategy} and other intuitive strategies (S$_A$, S$_B$, S$_C$).}
	\label{fig:strategy_price_comparison}
\end{figure}

\subsubsection{Unconstrained Allocation Strategy: Closed Form Solution}
In absence of constraints on the allocation strategy, we are able to derive a closed form solution to the BS problem \eqref{eq:BS_optimal_strategy}.
\begin{lemma}
	Let be $\mu, \alpha \in \mathbb{R}^n$, $\nu \in \mathbb{R}^{n\times n}$ be a full rank matrix and $\Sigma\coloneqq \nu \tr{\nu}$. Then the closed solution of the optimization problem \eqref{eq:BS_optimal_strategy} is 
	\begin{equation}
			\alpha^* = - \frac{\Sigma^{-1} \cdot \mu}{\|(\Sigma^{-1} \cdot \mu)\cdot \nu\|}.
	\end{equation}
\end{lemma}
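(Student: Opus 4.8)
The plan is to minimize the scale-invariant objective $f(\alpha) \coloneqq \dfrac{\alpha \cdot \mu}{\|\alpha \cdot \nu\|}$ over $\alpha \in \mathbb{R}^n \setminus \{0\}$. The key observation is that $f$ is homogeneous of degree zero in $\alpha$, so without loss of generality I would fix a normalization — the most convenient one being $\|\alpha \cdot \nu\| = 1$, i.e. $\alpha \cdot \Sigma \cdot \alpha = 1$, since $\Sigma = \nu\tr{\nu}$. The problem then reduces to minimizing the linear functional $\alpha \mapsto \alpha \cdot \mu$ on the ellipsoid $\{\alpha : \tr{\alpha}\Sigma\alpha = 1\}$, which is a textbook constrained optimization.

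First I would introduce the change of variables $\beta \coloneqq \tr{\nu}\alpha$ (equivalently $\alpha = \tr{(\nu^{-1})}\beta$, using that $\nu$ is full rank), so that $\|\alpha \cdot \nu\| = \|\beta\|$ and $\alpha \cdot \mu = \beta \cdot (\nu^{-1}\mu)$. Writing $m \coloneqq \nu^{-1}\mu$, the objective becomes $\dfrac{\beta \cdot m}{\|\beta\|}$, whose minimum over $\beta \neq 0$ is $-\|m\|$, attained (by Cauchy–Schwarz, with equality iff $\beta$ is a negative multiple of $m$) at $\beta^* = -m/\|m\|$ up to positive rescaling. Translating back, $\alpha^* \propto \tr{(\nu^{-1})}m = \tr{(\nu^{-1})}\nu^{-1}\mu = (\nu\tr{\nu})^{-1}\mu = \Sigma^{-1}\mu$. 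Hence the optimal direction is $-\Sigma^{-1}\mu$, and imposing the normalization $\|\alpha^* \cdot \nu\| = 1$ forces the stated denominator $\|(\Sigma^{-1}\cdot\mu)\cdot\nu\|$.

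Alternatively, and perhaps more transparently for the paper, I would use Lagrange multipliers directly on $\min \alpha\cdot\mu$ subject to $\tr{\alpha}\Sigma\alpha = 1$: the stationarity condition $\mu = 2\lambda\,\Sigma\alpha$ gives $\alpha = \tfrac{1}{2\lambda}\Sigma^{-1}\mu$, and substituting into the constraint fixes $\lambda$; the sign of $\lambda$ is pinned down by requiring the minimum (not the maximum), yielding $\alpha^* = -\Sigma^{-1}\mu / \|(\Sigma^{-1}\mu)\cdot\nu\|$. One should also note $\Sigma$ is symmetric positive definite (since $\nu$ is full rank), so $\Sigma^{-1}$ exists and the constraint set is a genuine ellipsoid, guaranteeing the minimum is attained and is unique up to the sign already eliminated.

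There is no serious obstacle here — the computation is routine linear algebra. The only points requiring a little care are: (i) justifying the reduction to the normalized problem, which follows from the degree-zero homogeneity of $f$; (ii) checking that the critical point found is the minimum rather than the maximum (relevant because the maximizing strategy appears in \Cref{eq:BS_optimal_strategy_put}), which the sign of the Cauchy–Schwarz/Lagrange argument handles; and (iii) confirming invertibility of $\Sigma$ from the full-rank hypothesis on $\nu$. I would present the Cauchy–Schwarz version as the main line of argument since it simultaneously proves existence, gives the value $-\|\nu^{-1}\mu\|$ of the minimum, and exhibits the minimizer explicitly.
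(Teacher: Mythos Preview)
Your proposal is correct. The Lagrange-multiplier alternative you sketch is exactly the paper's proof: reduce by zero-homogeneity to $\min\alpha\cdot\mu$ subject to $\|\alpha\cdot\nu\|^2=1$, write the Lagrangian, obtain $\mu=2\lambda\Sigma\alpha$ and hence $\alpha^*=\pm\Sigma^{-1}\mu/\|(\Sigma^{-1}\mu)\cdot\nu\|$, then select the minus sign for the minimum (plus for the put case).

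Your preferred Cauchy--Schwarz route via the substitution $\beta=\tr{\nu}\alpha$ is a mild but genuine variation. It buys you a bit more than the paper's argument: existence of the minimizer and the value of the minimum, $-\|\nu^{-1}\mu\|$, come for free from the equality case of Cauchy--Schwarz, whereas the Lagrangian computation only locates critical points and leaves the min-versus-max discrimination to a separate sign check. Either argument is entirely adequate here; the paper opts for the Lagrangian version presumably because it is the more familiar template for constrained optimisation and keeps the parallel with the maximisation in \eqref{eq:BS_optimal_strategy_put} explicit.
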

\begin{proof}
Since the argument of the minimum \eqref{eq:BS_optimal_strategy} is zero-homogeneous, then we can rewrite the problem as
\begin{equation}
		\begin{aligned}
			&\text { minimize } \alpha \cdot \mu\\
			&\text { subject to } \|\alpha \cdot \nu\|^2=1
	\end{aligned}
\end{equation}
By setting the Lagrangian function associated with the problem
\begin{equation}
	\mathcal{L}\left(\alpha, \lambda\right)=\alpha \cdot \mu-\lambda\left(\|\alpha \cdot \nu\|^2-1\right) \, ,
\end{equation}
we obtain the first order conditions
\begin{eqsys}
	\frac{\partial \mathcal{L}}{\partial \alpha}=\mu -2\lambda \Sigma \cdot \alpha=0 \\
	\frac{\partial \mathcal{L}}{\partial \lambda}= \|\alpha \cdot \nu\|^2-1=0
	\, ,
\end{eqsys}
Then, by applying simple algebra, we obtain the analytical form for the free optimal strategy
\begin{equation}
	\alpha^* = \pm \frac{\Sigma^{-1} \cdot \mu}{\|(\Sigma^{-1} \cdot \mu)\cdot \nu\|}.
\label{eq:analytical_solution_free_strategy}\end{equation}
We take the minus sign to get the minimum value of the TVS local drift while the plus sign for the maximum one (put payoff). 
\end{proof}

\subsubsection{Active Asset (or Bang Bang) Solution}
A closed form solution to the minimization of the local drift correction \eqref{eq:BS_optimal_strategy} can also be derived in the common case that all costs of carry are nonnegative and the only constraint on portfolio weights is nonnegativity, which would mean a long-only strategy by the fund manager.

\begin{lemma}
	Let $\mu \in \mathbb{R}^n$ be a vector with nonnegative components, $\nu \in \mathbb{R}^{n\times n}$ be a full rank matrix, and $\Sigma = \nu \tr{\nu}$. Then
	\begin{equation}\label{eq:inf}
		\inf_{\alpha \in \mathbb{R}_{+}^n \setminus \{0\}} \frac{\alpha \cdot \mu}{\norm{\alpha \cdot \nu}} = \min_{i\leq n} \frac{\mu_i}{\sqrt{\Sigma_{ii}}};
	\end{equation}
	if $\bar{\imath}$ is the index which realizes the min, then the infimimum is realized by a vector concentrated on the $\bar{\imath}$ component: $\alpha_i = \delta_{i\bar{\imath}}$. 
\end{lemma}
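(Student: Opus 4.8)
The plan is to establish the identity by two matching inequalities, after rewriting the denominator in terms of $\Sigma$. First I would note that $\norm{\alpha\cdot\nu}^2 = (\alpha\cdot\nu)\cdot(\alpha\cdot\nu) = \alpha\cdot\Sigma\cdot\alpha$, which is strictly positive for every $\alpha\neq 0$ because $\Sigma=\nu\tr{\nu}$ is positive definite when $\nu$ has full rank; hence the ratio in \eqref{eq:inf} is well defined on $\mathbb{R}_{+}^n\setminus\{0\}$. Write $c \coloneqq \min_{i\le n}\mu_i/\sqrt{\Sigma_{ii}}$ and let $\bar{\imath}$ be an index attaining it.

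The ``$\le$'' direction is immediate: evaluating the objective at the coordinate vector $e_i$ (with $(e_i)_j=\delta_{ij}$) gives $e_i\cdot\mu=\mu_i$ and $e_i\cdot\Sigma\cdot e_i=\Sigma_{ii}$, so the ratio equals $\mu_i/\sqrt{\Sigma_{ii}}$; taking $i=\bar{\imath}$ shows the infimum is $\le c$ and is in fact attained, not merely approached, at $\alpha=e_{\bar{\imath}}$, which also establishes the last sentence of the statement.

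The substantive step is the reverse inequality $\alpha\cdot\mu \ge c\,\norm{\alpha\cdot\nu}$ for every $\alpha\in\mathbb{R}_{+}^n\setminus\{0\}$. By the definition of $c$ we have $\mu_i\ge c\sqrt{\Sigma_{ii}}$ for all $i$, and since every $\alpha_i\ge 0$ this yields $\alpha\cdot\mu=\sum_i\alpha_i\mu_i \ge c\sum_i\alpha_i\sqrt{\Sigma_{ii}}$. It then suffices to prove $\bigl(\sum_i\alpha_i\sqrt{\Sigma_{ii}}\bigr)^2 \ge \alpha\cdot\Sigma\cdot\alpha$; expanding both sides this is $\sum_{i,j}\alpha_i\alpha_j\sqrt{\Sigma_{ii}\Sigma_{jj}} \ge \sum_{i,j}\alpha_i\alpha_j\Sigma_{ij}$, which holds term by term because $\alpha_i\alpha_j\ge 0$ and, writing $\nu_i$ for the $i$-th row of $\nu$ so that $\Sigma_{ij}=\nu_i\cdot\nu_j$ and $\Sigma_{ii}=\norm{\nu_i}^2$, the Cauchy--Schwarz inequality gives $\Sigma_{ij}\le\norm{\nu_i}\,\norm{\nu_j}=\sqrt{\Sigma_{ii}\Sigma_{jj}}$. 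Chaining these estimates and taking (nonnegative) square roots gives $\alpha\cdot\mu\ge c\sqrt{\alpha\cdot\Sigma\cdot\alpha}=c\,\norm{\alpha\cdot\nu}$, and dividing by $\norm{\alpha\cdot\nu}>0$ closes the proof.

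The only delicate point — the main obstacle, such as it is — is this lower bound: the idea is that nonnegativity of \emph{both} $\mu$ and $\alpha$ is exactly what permits reducing first to the componentwise bounds $\mu_i\ge c\sqrt{\Sigma_{ii}}$ and then to the entrywise Cauchy--Schwarz bound on $\Sigma$, all tested against the nonnegative weights $\alpha_i\alpha_j$; dropping the sign constraints makes the statement false, consistent with the unconstrained closed form of the preceding lemma. One should also dispatch the degenerate subcase in which some $\mu_i=0$: then $c=0$, the lower bound is trivial, and $e_i$ still attains it, so no separate argument is needed.
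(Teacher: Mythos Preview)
Your proof is correct and is in fact more streamlined than the paper's. Both arguments ultimately rest on the same entrywise Cauchy--Schwarz bound $\Sigma_{ij}\le\sqrt{\Sigma_{ii}\Sigma_{jj}}$ together with the nonnegativity of the weights $\alpha_i\alpha_j$, but they deploy it differently. The paper proceeds in three stages: it first treats the special case $\mu=\mathbbm{1}$, where by homogeneity one can restrict to the simplex $\{\alpha\cdot\mathbbm{1}=1\}$ and reduce to maximizing $\norm{\alpha\cdot\nu}^2$, bounding $\sum_{i,j}\alpha_i\alpha_j\Sigma_{ij}\le\Sigma_{\bar{\imath}\bar{\imath}}$; then it handles strictly positive $\mu$ by the diagonal rescaling $\beta=M\alpha$ that reduces to the first case; and finally it passes to general $\mu$ with possible zero components via a uniform-convergence limit $\mu\mapsto\mu+\epsilon$. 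You instead prove the lower bound $\alpha\cdot\mu\ge c\,\norm{\alpha\cdot\nu}$ directly for every admissible $\alpha$, by first using $\mu_i\ge c\sqrt{\Sigma_{ii}}$ against the nonnegative $\alpha_i$ and then the same Cauchy--Schwarz estimate against $\alpha_i\alpha_j\ge 0$. This avoids both the change of variables and the limiting argument, and it treats the degenerate case $c=0$ for free; the paper's route is a bit more roundabout but perhaps makes clearer the geometric picture of maximizing a positive-definite quadratic form over the simplex.
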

\begin{proof}
	Let us first consider the case in which $\mu = \mathbbm{1}$. Since the argument of the infimum is zero-homogeneous, normalizing by $\alpha \cdot \mathbbm{1} > 0$ we can restrict to the affine hyperspace $\{\alpha \cdot \mathbbm{1} = 1\}$, where the minimization \eqref{eq:inf} reduces to the maximization of its denominator: the required infimum will be the square root of the reciprocal of
	\[
	\sup\left\{\norm{\alpha \cdot \nu}^2 \mid \alpha \in \mathbb{R}_{+}^n,\, \alpha \cdot \mathbbm{1} = 1\right\}.
	\]
	Now we can note that $\Sigma$ is positive definite, hence $\Sigma_{ij} < \sqrt{\Sigma_{ii}\Sigma_{jj}} \leq \Sigma_{\bar{\imath}\bar{\imath}}$, which implies
	\[
	\norm{\alpha \cdot \nu}^2 = \sum_{i,j=1}^{n} \alpha_i \alpha_j \Sigma_{ij} \leq \sum_{i,j=1}^{n} \alpha_i \alpha_j \Sigma_{\bar{\imath}\bar{\imath}} = \Sigma_{\bar{\imath}\bar{\imath}}
	\]
	because $\sum_i \alpha_i = 1$. Since we trivially have equality for $\alpha_i = \delta_{i\bar{\imath}}$, this concludes the proof of the case $\mu = \mathbbm{1}$.
	
	Next, let us consider the case in which all components of $\mu$ are strictly positive, and define $M$ as the diagonal matrix with diagonal $\mu$. Then we can rewrite the infimum as a function of $\beta = M\alpha$:
	\[
	\inf_{\beta \in \mathbb{R}_{+}^n \setminus \{0\}} \frac{\beta \cdot \mathbbm{1}}{\norm{\beta \cdot M^{-1}\nu}},
	\]
	which by the first part of the proof equals 
	\[
	\min_{i \leq n} \frac{1}{\sqrt{\tilde{\Sigma}_{ii}}} = \min_{i \leq n} \frac{\mu_i}{\sqrt{\Sigma_{ii}}}, \quad \tilde{\Sigma} \coloneqq M^{-1}\nu\nu^{T} M^{-1} = M^{-1}\Sigma M^{-1}.
	\]
	
	Finally, let us consider the general case in which $\mu$ may have some components equal to zero. For an arbitrary $\epsilon \geq 0$ let us define
	\[
	f_{\epsilon}(\alpha) = \frac{\alpha \cdot (\mu+\epsilon)}{\norm{\alpha \cdot \nu}}.
	\]
	One can easily note that as $\epsilon \to 0$, $f_{\epsilon}$ tends to $f_{0}$ uniformly on the compact set $\{\alpha \in \mathbb{R}^n_+ \mid \alpha \cdot \mathbbm{1} = 1\}$, so that the minimum converges to the minimum on that set. Since we know by homogeneity that the minimum on $\{\alpha \in \mathbb{R}^n_+ \mid \alpha \cdot \mathbbm{1} = 1\}$ equals the minimum on $\mathbb{R}^n_+ \setminus \{0\}$, we conclude
	\[
	\inf_{\alpha \in \mathbb{R}_{+}^n \setminus \{0\}} f_0(\alpha) = 
	\lim_{\epsilon \to 0+} \inf_{\alpha \in \mathbb{R}_{+}^n \setminus \{0\}} f_{\epsilon}(\alpha) = 
	\lim_{\epsilon \to 0+} \min_{i\leq n} \frac{\mu_i+\epsilon}{\sqrt{\Sigma_{ii}}} =
	\min_{i\leq n} \frac{\mu_i}{\sqrt{\Sigma_{ii}}}.
	\]
\end{proof}

\subsection{Hamilton-Jacobi-Bellman Equation for Target Volatility Options}\label{subsec:HJB}
In this section, we want to provide to the reader a formal description of the dynamic problem associated with options on target volatility strategies by writing the Hamilton-Jacobi-Bellman (HJB) equation for the derivative price. We prove that from this equation one can recover the same closed formula \eqref{eq:BS_optimal_strategy} for the time-dependent BS model which was derived above from the Gy\"ongy Lemma. 

In full generality we assume that the time evolution of the Markovian factors governing the problem dynamics is given by  a stochastic multidimensional process in $\mathbb{R}^n$, $X_t$, that is the unique strong solution to the following It$\hat{\text{o}}$ SDE 
\begin{equation}
		dX_t = M(X_t)dt + \Sigma(X_t) \cdot dW_t,
\end{equation}
which is driven by a $n$-dimensional Wiener process with independent components, $W_t$. We point out to the reader that with this notation we are including general dynamics models like those with stochastic drift $M(X_t)$ and stochastic diffusive $\Sigma(X_t)$ coefficients. Let be the dynamics of the securities $S_t$ a generic function of the Markovian factors, namely
\begin{equation}
 S_t \coloneqq  f(X_t).
\label{eq:S_f}\end{equation}

In this framework, the TVS price process dynamics is given by the SDE 
\begin{equation}
	\frac{dI_t}{I_t} = \left(r(X_t) -  \frac{\bar{\sigma} \alpha_t}{\|\alpha_t \cdot \nu(X_t)\|}  \cdot \mu(X_t)    \right)dt +\frac{\bar{\sigma}\alpha_t }{\|\alpha_t \cdot \nu(X_t) \|} \cdot \nu(X_t) \cdot dW_t,
\end{equation}
where the expression of $\mu(X_t)$ and $\nu(X_t)$ can be recovered by applying the It$\hat{\text{o}}$ formula to \Cref{eq:S_f}.

Given $X \coloneqq X_t$ and $I \coloneqq I_t$, we can write the HJB equation for $V \coloneqq V(t,I,X)$ as follows
\begin{equation}
	\begin{aligned}
		\frac{\partial V}{\partial t} + \max_\alpha  \left\{   \left(r(X) - \bar{\sigma}\frac{\alpha \cdot \mu(X)}{\|\alpha \cdot \nu(X)\|}\right) I \frac{\partial V}{\partial I} + (\nabla_X V) \cdot M(X) + \frac{1}{2} \bar{\sigma}^2I^2 \frac{\partial^2V}{\partial I^2}  \right.\\ 
		\left. +\frac{1}{2}\Tr\left(\tr{\Sigma(X)}(H_XV)\Sigma(X)\right)  + (\nabla_{X,I}V) \cdot \Sigma(X) \cdot \left(I\bar{\sigma}\frac{\alpha \cdot \nu(X)}{\|\alpha \cdot \nu(X)\|}\right)\right\} = 0,
	\end{aligned}
\end{equation}
where $\Tr(A)$ is the trace operator of $A$, $\nabla_X V$ the gradient of $V$ w.r.t. $X$, $H_XV$ the Hessian matrix of $V$ w.r.t.~$X$ and $\nabla_{X,I}V$ is the vector defined by:
\begin{equation}
	\nabla_{X,I} V \coloneqq \tr{\left(\frac{\partial^2 V}{\partial X^1 \partial I},\dots, \frac{\partial^2 V}{\partial X^n \partial I} \right)}.
\end{equation}
We take out from the maximum operator all the elements that do not depend on the risky allocation strategy $\alpha$
\begin{equation}
	\begin{aligned}
		\frac{\partial V}{\partial t} +   r(X) I \frac{\partial V}{\partial I} + (\nabla_X V) \cdot M(X) + \frac{1}{2} \bar{\sigma}^2I^2 \frac{\partial^2V}{\partial I^2} + \frac{1}{2}\Tr\left(\tr{\Sigma(X)}(H_XV)\Sigma(X)\right)  \\ 
		+\bar{\sigma}I\max_\alpha  \left\{  -\frac{\partial V}{\partial I}   \frac{\alpha \cdot \mu(X)}{\|\alpha \cdot \nu(X)\|} + 
		(\nabla_{X,I}V) \cdot \Sigma(X) \cdot \left(\frac{\alpha \cdot \nu(X)}{\|\alpha \cdot \nu(X)\|}\right)\right\} = 0.
		\end{aligned}
\label{eq:HJB_general}\end{equation}
\Cref{eq:HJB_general} is the Hamilton-Jacobi-Bellman equation describing the TVS dynamic problem for a generic dynamics of the risky securities underlying the portfolio.

If we assume a time-dependent BS dynamics for the risky equities ($\mu_t$, $r_t$ and $\nu_t$ deterministic), then $V = V(t,I)$ and all the derivatives w.r.t. $X$ are zero. Therefore the reduced HJB equation is
\begin{equation}
	\frac{\partial V}{\partial t} +  r(t) I \frac{\partial V}{\partial I}  + \frac{1}{2}\bar{\sigma}^2I^2\frac{\partial^2V}{\partial I^2} + \bar{\sigma}I\max_\alpha \left\{ -\frac{\partial V}{\partial I}   \frac{\alpha \cdot \mu(t)}{\|\alpha \cdot \nu(t)\|}\right\} = 0;
\end{equation}
if the payoff is non-decreasing in $I$ by homogeneity of the SDE we get that $V$ is non-decreasing; thus the solution is given by
\begin{equation}
 \alpha^*(t) = \argmin_\alpha \frac{\alpha \cdot \mu(t)}{\|\alpha \cdot \nu(t)\|},
	\end{equation}
which is the same result expressed in \Cref{eq:BS_optimal_strategy}. On the other hand, if the payoff is non-increasing in $I$ then the solution will be the $\argmax$.

Conversely, if we deal with a dynamic model for which the derivative contract $V$ depends on $X$ then the volatility versor term, namely the second one, in the $\max$ operator of \Cref{eq:HJB_general} is no longer zero and thus one must solve the entire control problem numerically. 

In this work we tackle the non-trivial case of a local volatility (LV) model for the $S_t$-dynamics, such that $\nu_t = \nu(t,S_t)$. We have chosen the LV model since it is well known in financial literature and among practitioners.

\section{Reinforcement Learning}\label{sec:RL}
As we have discussed in the previous sections, one must resort to numerical approaches to solve the stochastic control problem related to the TVS in the case of general payoffs or risky securities dynamics. The standard approach could be to use classical techniques based on backwards recursion \eqref{eq:recursion}-\eqref{eq:recursion2} such as American Monte Carlo \citep{Longstaff2001}. However, their performances degrade exponentially as the dimension $n$ of the problem increases, making prohibitively costly finding the solution to the problem. In our contribution, we adopt a novel technique which is free from the curse of dimensionality and is gaining popularity in many scientific branches for solving stochastic optimal control problems: Reinforcement Learning (RL) \citep{Sutton2018}. 

Reinforcement Learning is a branch of Machine Learning which allows an artificial agent to interact with an environment through actions and observations in order to maximize some notion of cumulative reward. In RL the agent is not told which actions to take but instead it must discover by trial and error which are the behaviours  yielding the highest reward. This is obtained by updating the agent policy $\pi$ which is a mapping from the environment states to the set of actions. Thus RL is independent of pre-collected data as opposed to other Machine Learning techniques. Because of its nature, RL has been successful in quantitative finance for solving control problems; among the most important RL applications in this field, we refer to \textcite{Deng2017} as the pioneers in studying self-taught reinforcement trading problems, while to \textcite{Kolm2019} and \textcite{Halperin2020} for hedging derivatives with RL under market frictions.

In our work we adopt two learning strategies to compare their performances in terms of overall reward: an \textit{ad hoc} direct policy algorithm and the state of the art proximal policy optimization (PPO) developed in \textcite{Schulman2017} and \textcite{Schylman2016}. 

The first method is a specific algorithm developed by us to fit the problem aim to find the optimal option price; this technique can take place as a direct policy optimization method in the wide taxonomy of Reinforcement Learning algorithms. We will go into details in \Cref{sec:direct_policy}.

On the other hand, the PPO is a high-level actor-critic algorithm well-suited for continuous control problems. It collects a small batch of experiences interacting with the environment to update its decision-making policy. From those interactions with the environment, PPO is able to compute the expected reward and the value function. We will not provide a complete description of this sophisticated learning strategy; for more details, we refer to the authors' papers. In our work, we adopt the PPO implementation found in \textsc{OpenAI Baselines} \citep{Dhariwal2017}. 

In the following sections, we describe the way we have formalized the TVS problem in the Reinforcement Learning framework.

\subsection{Direct Policy Approach}\label{sec:direct_policy}
We consider an episode $\tau$ of length $m+1$ that takes place on a discrete time-grid of fixing times expressed in year fractions $\mathcal{T}\coloneqq \{T_0,...,T_k,...,T_m\}$ with $T_0\coloneqq 0$ and $T_m \coloneqq T$ maturity of the option. At a given episodic time $T_k$ the RL agent interacts with the environment: it receives a representation of the environment called state $s_k$ and on the basis of that it selects an action $a_k$ sampling from the current policy $\pi^{\theta_l}$. Here with $\theta_l$ we refer to the set of parameters through which we parameterize the agent policy at the $j$-th algorithm iteration. In our case the agent can choose the composition of the risky asset portfolio, so that the policy is the allocation strategy $\alpha$ introduced in \Cref{eq:TVS_elegant}:
\begin{equation}
 	a_k = \alpha_{T_k} \quad \forall T_k \in \mathcal{T}, \quad a_k \in \mathcal{A}\subset \mathbb{R}^n.
\label{eq:action}\end{equation} 

Since the value function of the problem depends on the Markovian state $X_t$, the portfolio level $I_t$ and time $t$, our natural choice for the observation state is the following block
\begin{equation}
 	s_{k} \coloneqq \left[X_{T_k}, I_{T_k}, T_k \right] \quad \forall T_k \in \mathcal{T}, \quad s_k \in \mathcal{S} \subset \mathbb{R}^{n+2}.
\label{eq:state}\end{equation}
In this way, the state contains all the information needed by the agent to take an optimal action, leading to the maximum plain vanilla TVO price. 

In this algorithm, we parameterize the agent policy with a feed forward neural network (FFNN), such that $\theta_l$ coincide with the hidden weights, $s_k$ with the inputs, and $a_k$ with the output. In this way, we are dealing with a deterministic policy, where its functional form is given by the neural network. The parameters update is performed as follows: the agent collects a finite batch of experiences interacting with the environment in a set of episodes $\tau$, then the loss function is evaluated as
\begin{equation}
L(\theta_l) = D(0,T;\zeta)\hat{\mathbb{E}}\left[ \Phi(I_T(\alpha(\theta_l))) \right],
\label{eq:direct_policy_loss}\end{equation}  
where the expectation $\hat{\mathbb{E}}[\dots]$ indicates the empirical average on the batch. Then the parameters are updated by plugging the policy into a stochastic gradient ascent algorithm. The choice of \Cref{eq:direct_policy_loss} is justified by the fact that, when the algorithm will find the optimal set $\theta^*$, then $L(\theta^*)$ will be a good proxy of the optimal option price. 

Once the training phase is ended and the agent has selected the optimal policy, we can run a Monte Carlo (MC) simulation with never seen scenarios to price the optimal target volatility option and test if the algorithm does not overfit the data.

\subsection{Proximal Policy Optimization Approach}
As for the direct policy approach, we model the pricing problem considering an episode that takes place on the time-grid $\mathcal{T}\coloneqq \{T_0,...,T_k,...,T_m\}$ with $T_0\coloneqq 0$ and $T_m \coloneqq T$. Again we choose as observation state the block defined in \Cref{eq:state} and the agent policy coincides with the risky allocation strategy $\alpha$ (\Cref{eq:action}). An important difference from the previous method is that: once the agent has selected the action $a_k$ sampled from the current policy, it receives at the next time $T_{k+1}$ a reward $r_{k+1}$ generated by the environment. 

In our work, we have defined two different reward functions with the purpose to analyze which could help the agent to learn more efficiently the optimal policy $\pi^* = \pi^{\theta^*}$. 
Our first definition is
 \begin{equation}
 	r_{k+1}=\left\{\begin{array}{ll}
 		(I_T(\alpha(\theta_l)) - K)^+ & \text { if } T_{k+1} = T \\
 		0 & \text { otherwise }
 	\end{array}\right.
 \label{eq:reward1}\end{equation}
Therefore, during the whole episode, the agent receives a nil reward except at maturity when the reward coincides with the option intrinsic value. This choice may seem too daring because the agent receives a real feedback of its actions only at the end of the whole episode, increasing the probability to obtain a slow learning. However, if the agent has learnt $\pi^*$, the average cumulative reward per episode will coincide with the optimal TVO price. 

The second reward function we have defined is 
\begin{equation}
 	r_{k+1} \coloneqq  \gamma^{k}[V_{\text{BS}}(T_{k+1})-V_\text{\text{BS}}(T_k)],
 \label{eq:reward2}\end{equation}
where $\gamma \in[0,1]$ is an hyper-parameter of the PPO, while $V_{\text{BS}}(T_k)$ is a proxy of the residual option price defined by
\begin{equation}
	V_{\text{BS}}(T_k) \coloneqq BS(F^{\text{TVS}}(T_k,T;\alpha^*_{\text{BS}},K),K,T-T_k,\bar{\sigma},D(T_k,T;\zeta)) \quad \text{and} \quad V_{\text{BS}}(T_0)=0,
\end{equation} 
with $\alpha^*_{BS}$ the BS optimal strategy \eqref{eq:BS_optimal_strategy} calculated in the state $s_k$. In this form the agent actions are hidden inside the term $I_t$ used to compute the TVS forward curve $F^{\text{TVS}}$ defined by \Cref{eq:tvs_forward}. In this case, the reward function does not suffer of nil values for $0<T_k<T$: the RL agent always gets a feedback from the environment for its choices.  The hyper-parameter $\gamma$ plays the role of a discount factor in the sense that, as $\gamma$ approaches to zero, the RL agent will tend to maximize immediate rewards while neglecting possible larger rewards in the future.
If we take the cumulative reward per episode and set the PPO parameters\footnote{We refer to \textcite{Schylman2016} for a more detailed description for the generalized advantage estimation parameter $\lambda$.} $\gamma =\lambda =1$ we obtain
\begin{equation}
	R(\tau) = \sum_{k=0}^{m-1} r_{k+1} \underset{\gamma=\lambda =1}{=} \sum_{k=0}^{m-1}[V_{\text{BS}}(T_{k+1})-V_\text{\text{BS}}(T_k)] = V_{BS}(T) = (I_T-K)^+,
\end{equation} 
which is equal to the intrinsic value of the option. This result does not depend on the definition of $V_{\text{BS}}$  $\forall T_k<T$, but we conjecture that the closer $V_{\text{BS}}$ is to the value function, the easier the agent is in learning.

Thus one can train the agent choosing the optimal value for $\gamma,\lambda\in[0,1]$, and then run, as test phase, a Monte Carlo simulation with $\gamma=\lambda=1$ and the optimized $\theta$ fixed, where, if the agent has learnt $\pi^*$, the average of $R(\tau$) along different episodes will match the optimal undiscounted price of the derivative contract on the TVS.

In the \textsc{OpenAI Baselines} implementation of the PPO, the agent policy is parameterized again by a neural network; as for the previous method we have chosen an FFNN. In particular with PPO we deal with a stochastic policy whose functional form is a multivariate diagonal Gaussian distribution where the mean $\mu^\theta(s_k)$ is the output vector of the FFNN and the log-standard deviation $\log\sigma$ is an external parameter
\begin{equation}
	\pi^\theta(s_k) \sim \mathcal{N}(\mu^\theta(s_k),e^{\log\sigma}).
\end{equation}
As one can observe, $\log\sigma$ is state-independent, but it is reduced as the number of the PPO update iterations increases. The idea is that the log-standard deviation will be higher at the beginning of the training phase in order to guarantee a good exploration of the action space while it will be lower at the end to avoid too much noise in the proximity of the optimal policy.

The fact that the PPO implementation exploits a stochastic policy ensures us a better exploration of the action space than with the previous approach. Moreover, the algorithm tries to learn an approximator of the on-policy value function as control variate for the training phase. This approximator is an FFNN with the same architecture as the one for the policy.

\section{Numerical Investigations}\label{sec:Numerical_investigations}
Here we present the numerical results obtained with our proposed methods. We focus our analysis on a European call option on a TVS with the following product details
\begin{equation}\nonumber
	I_0 = K = 1 \, \text{EUR}, \quad T=2 \, \text{yr}, \quad \bar{\sigma} = 5\%.
\end{equation}
Moreover without loss of generality we consider the case of a completely free allocation strategy $\alpha$. The extension to the constrained case is easy.

It is our aim to investigate the control problem under non-trivial dynamics like the local volatility one where the volatility of the risky asset is also a function of the state. By looking at the HJB equation \eqref{eq:HJB_general}, we expect that the volatility versor will play a role in finding the optimal solution, giving rise to a non-trivial strategy.

Although in \Cref{subsec:BS_model} we have proved that under the Black and Scholes model the \Cref{eq:BS_optimal_strategy} solves the control problem, we want to take advantage of this \textit{a priori} solution as a benchmark to gather evidence on the robustness of our Reinforcement Learning approach and to check if our analytical result is correct. Moreover, we use the BS model as numerical laboratory to perform fine-tuning tests for the RL algorithms hyper-parameters and analyze how they impact the final results and performances.

We recall that in both the algorithms we parameterize the agent policy with an FFNN; thus this is completely characterized by the following hyper-parameters: number of hidden layers, number of neurons per hidden layer, activation function per hidden layer. This is due to the fact that in this Reinforcement Learning problem the number of the input neurons is equal to the state space dimension \eqref{eq:state}, while the number of the output ones coincides with the action space dimension \eqref{eq:action}. It is well known in the literature that neural networks give better performances in the training phase if the input data are well normalized \citep{Sola1997,Puheim2014}. Thus we choose as state $s_k$ the following normalized block  

\begin{equation}
	[\log(S_{T_k}/F(0,T_k)), I_{T_k}/I_0, T_k ] \quad \forall T_k\in\mathcal{T},
\label{eq:input_state}\end{equation}
where $F(t,T)$ is the forward curve vector of the risky assets from $t$ to $T$. In \Cref{eq:input_state} we have chosen as Markovian state $X_{T_k}$ the martingale term of the securities dynamics. In this way we have that in the input block the variables have the same order of magnitude.

\subsection{Black and Scholes: Hyper-Parameters Fine Tuning}
We use the BS environment as toy model to understand which parameters of the RL algorithm play key roles in the training and testing phase. 
Our first approach has been the direct policy one since it represents a natural way to tackle the problem: since our goal is to find the allocation 
strategy that maximizes the option price, we update the FFNN weights following the gradient direction of the loss function defined in \Cref{eq:direct_policy_loss}. 

We try to investigate the following hyper-parameters: the FFNN architecture, in particular which feature between the depth and the width of the network is more important, the activation function and the learning rate of the optimizer. Moreover, we compare the performances of two well-known optimizers in Machine Learning literature: Nadam \citep{Dozat2016} and RMSprop \citep{Hiton2012}. Since we deal with a free allocation strategy $\alpha$ that can assume negative values, we have chosen among the wide variety of activation functions the tanh and the elu. In this way, we can analyze the performance of a saturating activation function and a non-saturating one. Firstly we have performed a grid search on the learning rate starting value and we have found $10^{-3}$ a good choice in terms of speed of learning and avoiding over-fitting.
 \begin{figure}[tb]
	\centering
	\includegraphics[width=7.2cm,height=6cm]{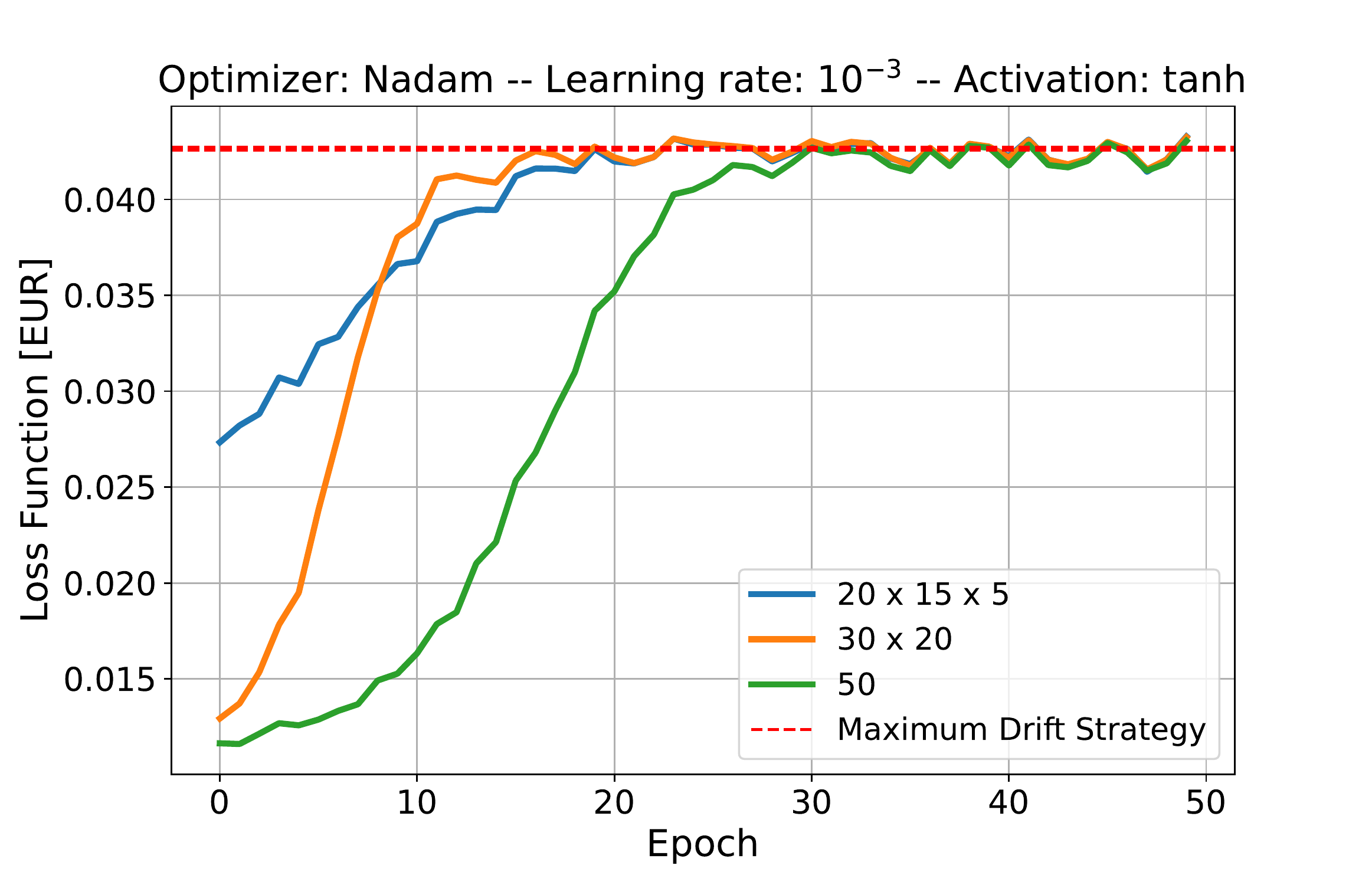}
	\includegraphics[width=7.2cm,height=6cm]{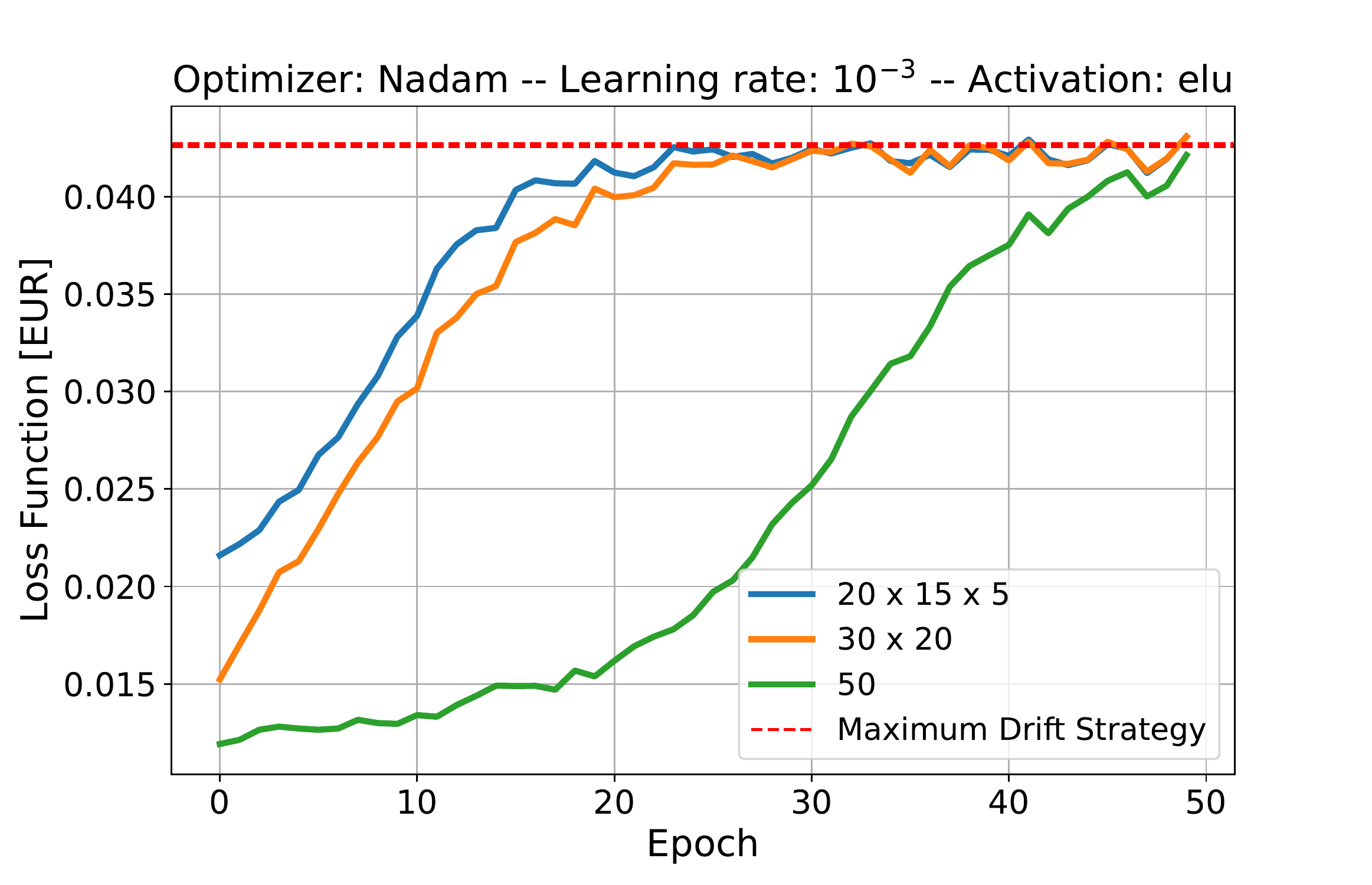}
	\caption{Learning curves of the direct policy algorithm applied to the Black and Scholes model. The solid lines are the loss functions defined in \Cref{eq:direct_policy_loss} in function of the number of training epochs. Each continuous line represents a different neural network architecture with tanh activation function (left) and elu activation function (right). The optimizer adopted is the Nadam with a learning rate of $10^{-3}$. The horizontal red-dashed line is the conservative option price obtained by maximizing the TVS drift through \Cref{eq:analytical_solution_free_strategy}.}
	\label{fig:result_bs_nadam}
\end{figure}  
 \begin{figure}[tb]
	\centering
	\includegraphics[width=7.2cm,height=6cm]{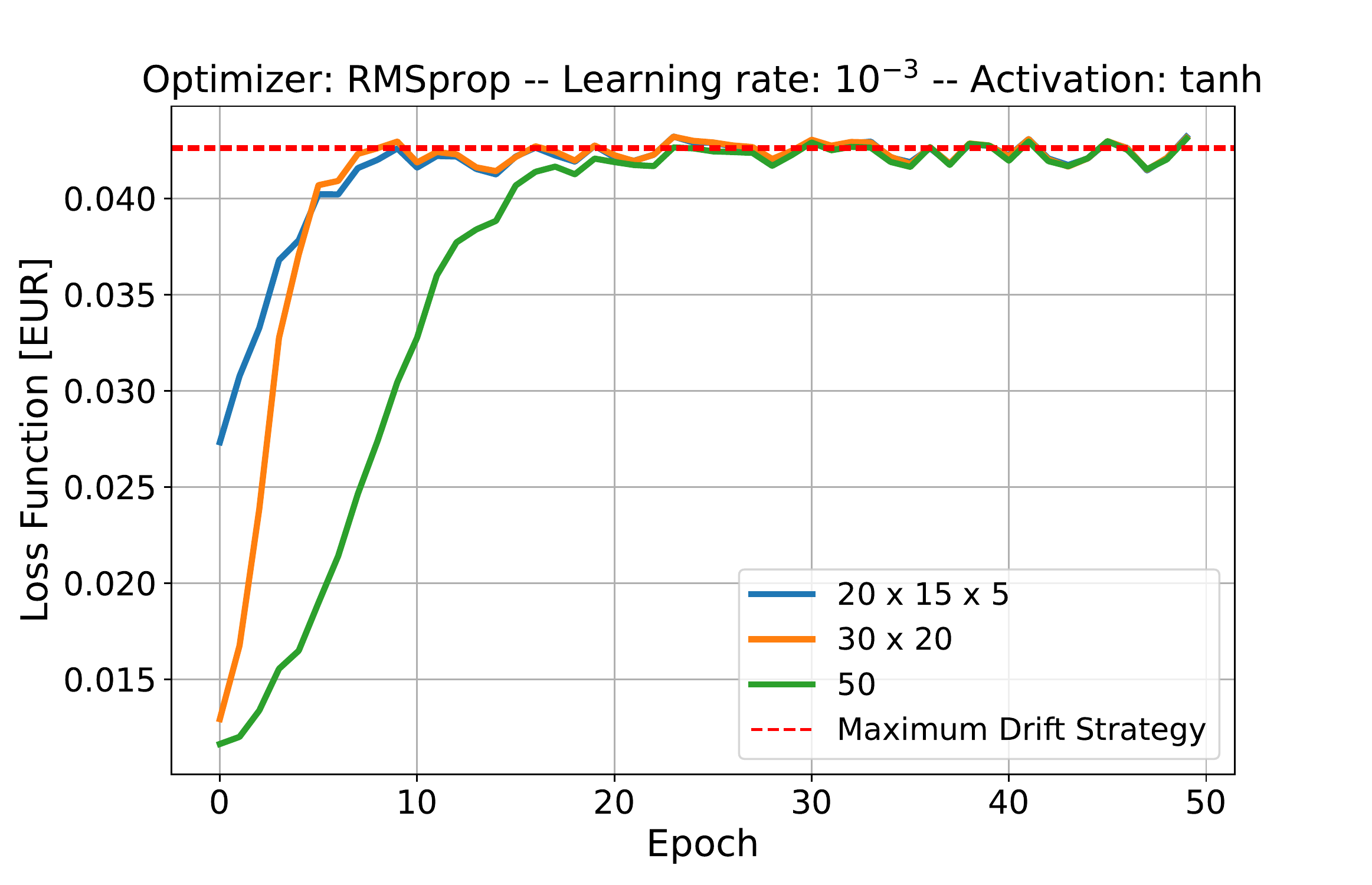}
	\includegraphics[width=7.2cm,height=6cm]{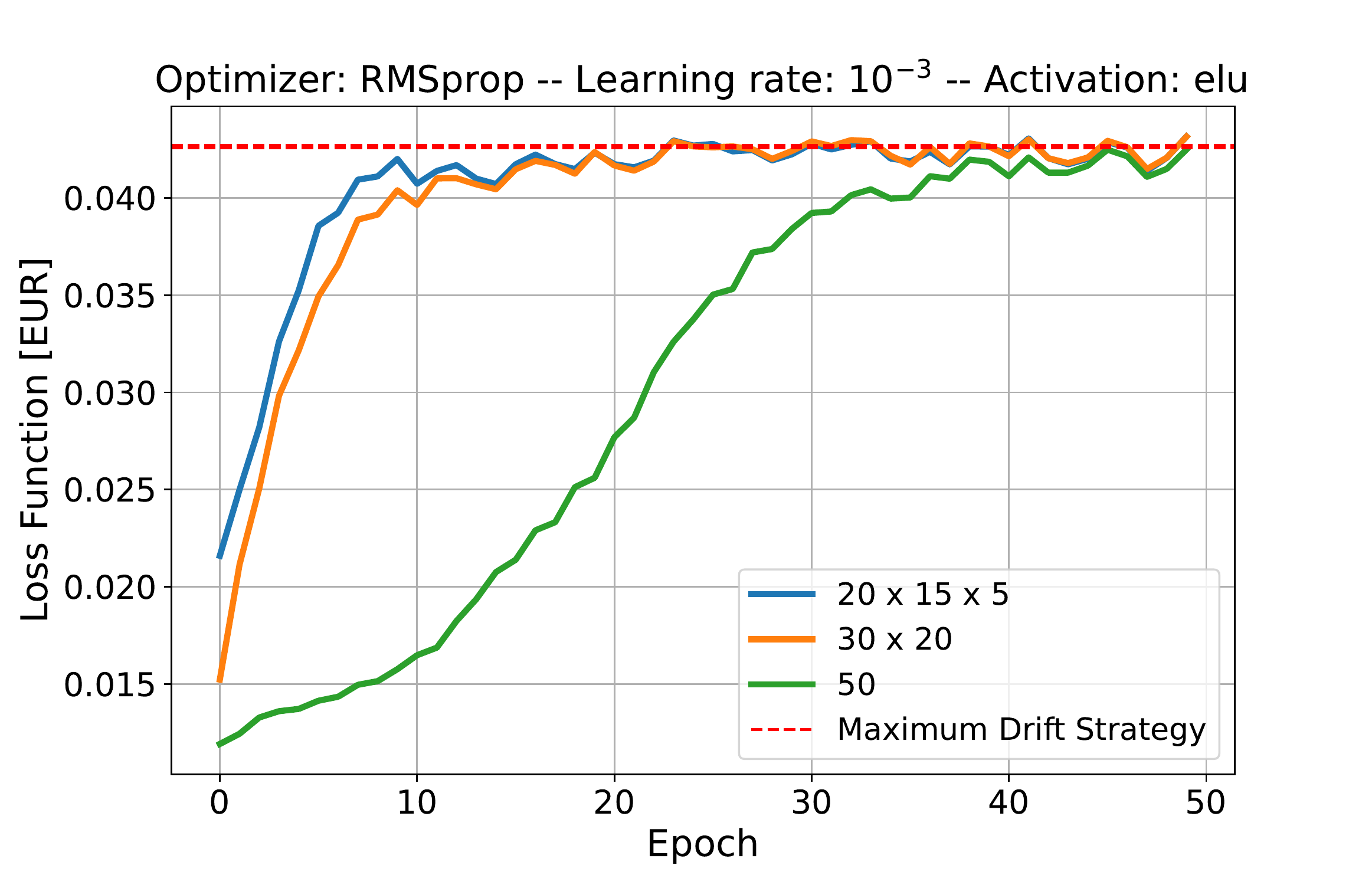}
	\caption{Learning curves of the direct policy algorithm applied to the Black and Scholes model. The solid lines are the loss functions defined in \Cref{eq:direct_policy_loss} in function of the number of training epochs. Each continuous line represents a different neural network architecture with tanh activation function (left) and elu activation function (right). The optimizer adopted is the RMSprop with a learning rate of $10^{-3}$. The horizontal red-dashed line is the conservative option price obtained by maximizing the TVS drift through \Cref{eq:analytical_solution_free_strategy}.}
	\label{fig:result_bs_rmsprop}
\end{figure}  

In \Cref{fig:result_bs_rmsprop,fig:result_bs_nadam} we present the learning curves of our fine-tuning tests for the RMSProp and the Nadam respectively. The three lines of each plot display the learning curves for different FFNN architectures: a one hidden layer network with 50 neurons, a 2 hidden layers with 30 and 20 neurons respectively, and a deeper one with three layers with 20, 15, and 5 neurons. Each learning curve is compared with the optimal option price (red dotted line) that we have computed with the closed form solution we have derived for the BS model in \Cref{subsec:BS_model}. We can see that all the learning curves converge to our expected price, providing a numerical demonstration of our theoretical result. More deeply, from the plots we observe that the RMSprop optimizer outperforms the Nadam. Moreover, the tanh activation function seems to be more preferable than the non-vanishing elu. However, more importantly, we have evidence that a deeper architecture of the neural network outperforms the shallow one. All the learning curves we have presented are the best-in-sample results, in terms of performance, of four runs with different random starting guesses for the hidden weights $\theta$. This procedure is necessary since the objective function is not convex.

We take the $20\times 15 \times 5$ network with tanh from RMSprop as the best optimized network, and we run a Monte Carlo simulation with $10^6$ never-seen scenarios to check if the agent overfits the new data. 
\begin{table}[tb]
	\centering
	\begin{tabular}{|c|c|} \hline
		Method & TVO price [EUR]  \\ \hline 
		Analytical Solution & $4.2634\times 10^{-2}$ \\
		Direct policy RL & $(4.2624 \pm 0.005) \times 10^{-2}$ \\ \hline
	\end{tabular}\caption{Comparison of TVO prices under Black and Scholes model: analytical solution price and direct policy reinforcement learning. The option parameters are: $I_0=K=1\,[\text{EUR}]$, $T=2\,[\text{yr}]$ and $\bar{\sigma}=5\%$.} 
\label{tab:mc_results_bs}\end{table}
We report the results in \Cref{tab:mc_results_bs}: the RL price is compatible with the closed formula price. Thus the RL agent did not overfit the data during the training phase.

We will take advantage of those fine-tuning results to tackle the local volatility problem.

\subsection{Local Volatility Dynamics}
In this section, we study the TVS control problem assuming a local volatility model for the dynamics of the risky assets. Thus in this case we have a diffusive term in \Cref{eq:Equity_process} that is a deterministic function both of time and state, i.e. the spot price, $\nu_t=\nu(t,S_t)$. This additional dependency of volatility makes the problem of finding the optimal strategy non-trivial; in fact if we consider the whole securities smiles then the second-order term in the HJB equation \eqref{eq:HJB_general} is not zero and thus a closed formula for $\alpha^*$ is not available anymore. Because of that, one must resort to numerical techniques to recover the problem solution. Unlike the BS model where the strategy depends only on time, in LV dynamics there is no unnecessary information provided by the state block in \Cref{eq:state} to take the optimal action. Here we consider the same market data ($r_t$, $\mu_t$ and $\nu_t$) as in the Black and Scholes environment to study how the optimal solution changes with the dynamics model. 

\begin{figure}[tb]
	\centering
	\includegraphics[width=9cm,height=6cm]{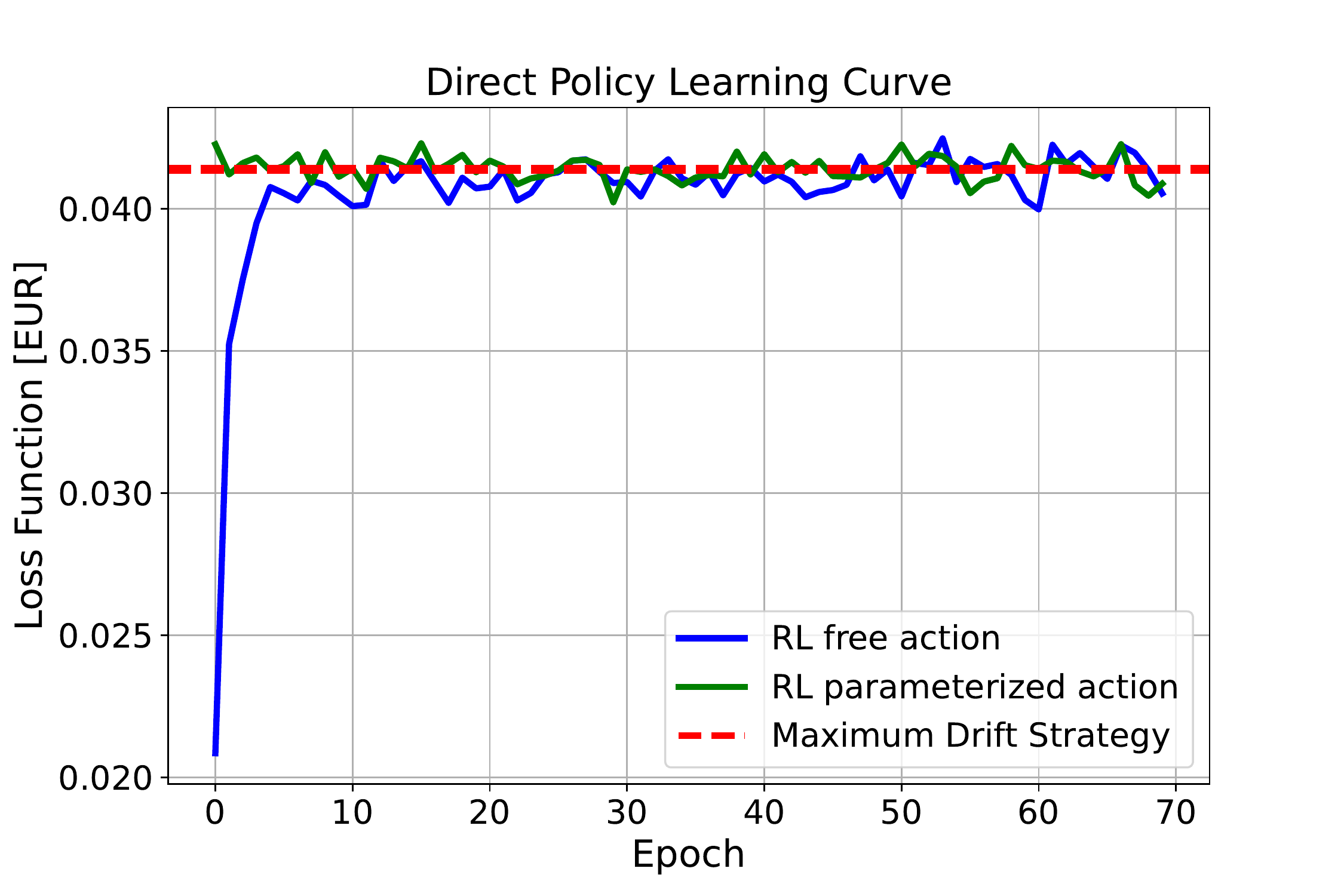}
	\caption{Learning curves of the direct policy algorithm applied to the local volatility model. The solid blue line is the learning curve of an agent with free actions, while the green solid one is the learning curve of an agent whose actions are parameterized with the maximum drift strategy. The two agents are parameterized by a $20 \times 10 \times 5$ FFNN with tanh activation function. The loss function is computed according to \Cref{eq:direct_policy_loss}. The optimizer adopted is the RMSprop with a learning rate of $10^{-3}$ for the blue line while $10^{-4}$ for the green one. The horizontal red dashed lines are the 99\% confidence interval of the MC price obtained by maximizing path-wise the TVS drift through \Cref{eq:analytical_solution_free_strategy}.}
	\label{fig:result_lv_direct_policy}
\end{figure}  
Our first way to tackle the problem is to exploit the results obtained in the BS case. Thus we train by direct policy algorithm a deep $20\times 10\times 5$ neural network with tanh activation function and a RMSprop optimizer. We report in \Cref{fig:result_lv_direct_policy} the corresponding learning curve (blue line). From this, we observe that the network has learnt some good policy since the curve grows as the number of training epochs increases until it saturates at a certain value. To try to measure the performance of the policy selected by the agent, we can build a n\"aif strategy called \virgolette{baseline}. By looking at the Markovian projection in \eqref{eq:markovian_projection}-\eqref{eq:local_drift} and the first term in the HJB \eqref{eq:HJB_general}, a natural choice for the baseline is to maximize the TVS local drift. In other words this coincides by applying the path-wise the Black and Scholes solution of \Cref{eq:BS_optimal_strategy}; since here we deal with a free allocation strategy, we can simply use our analytical result \eqref{eq:analytical_solution_free_strategy}. In the same \Cref{fig:result_lv_direct_policy} we report the 99\% confidence interval of the MC price obtained with the maximum drift baseline as two red-dashed lines. As we can see, the optimized loss function is compatible with the baseline price. Following the theoretical result of the HJB equation, we can assert that the agent has learnt a sub-optimal policy. Since we LV model differs from the BS one for a corrective term in the HJB \eqref{eq:HJB_general}, we expect that the optimal solution  in the LV framework will be in a close region of the maximum drift strategy. Thus we train another agent with the direct policy learning by parameterizing its action with the baseline strategy and choosing a smaller learning rate of $10^{-4}$. With this parameterization, at each observational time $T_k\in\mathcal{T}$ the risky allocation strategy is obtained by summing the network output with the \Cref{eq:analytical_solution_free_strategy}. Even in this case, the corresponding learning curve (green line in \Cref{fig:result_lv_direct_policy}) is stuck in the maximum drift strategy. The first possible reason for this behaviour is that the learning strategy of the direct policy approach is too simple for the learning task. The second interpretation is that the volatility versor in the HJB does not affect significantly the optimum location since it is a second-order term and it is lost in the MC error. 

Because of that, we change the learning strategy by adopting a more sophisticated one: the PPO algorithm. The big advantage of PPO is that, in addition to the policy, a guess of the value function in \Cref{eq:HJB_general} is also computed by a parameterization through an FFNN. As for the direct policy approach, we use the BS environment to fine-tune the PPO hyper-parameters. Again we experienced that deeper FFNNs outperform shallow ones and tanh is the most preferable. We report for completeness the values of the other PPO hyper-parameters, for the description of which we refer to \textcite{Schulman2017}: learning rate $3\times 10^{-4}$, $\lambda=0.95$, $\epsilon=0.2$, $c_1=0.7$, $c_2=0$ and mini-batch size of 2048 episodes. 

We have trained a 5 layer FFNN with 8 neurons each with the PPO algorithm in two different environments: one environment generates the reward according to \Cref{eq:reward1}, while the other exploits the reward function \eqref{eq:reward2} where we set $\gamma=0.98$ as discount factor to make the agent prefer immediate rewards. Again, for both environments, we train two different agents: one whose action is given by \Cref{eq:action}, while the other implements the action parameterization with the baseline.
\begin{figure}[tb]
	\centering
	\includegraphics[height=6cm,width=7.2cm]{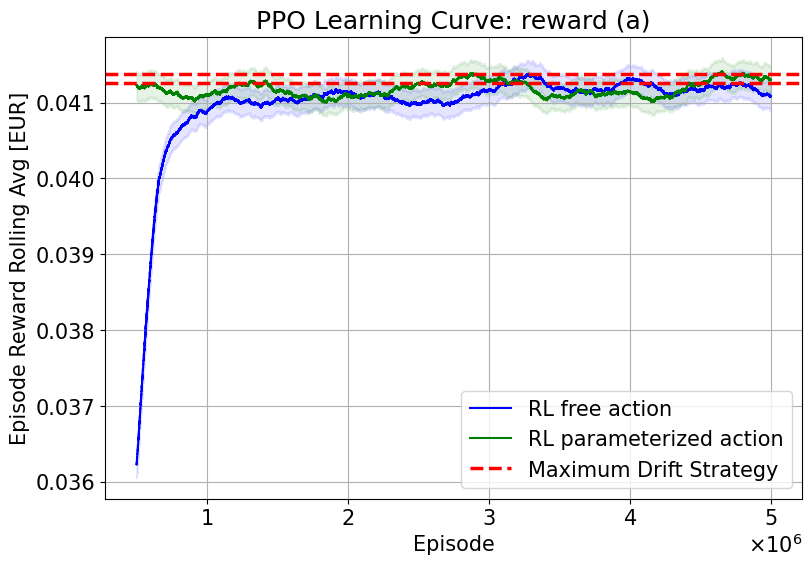}
	\includegraphics[height=6cm,width=7.2cm]{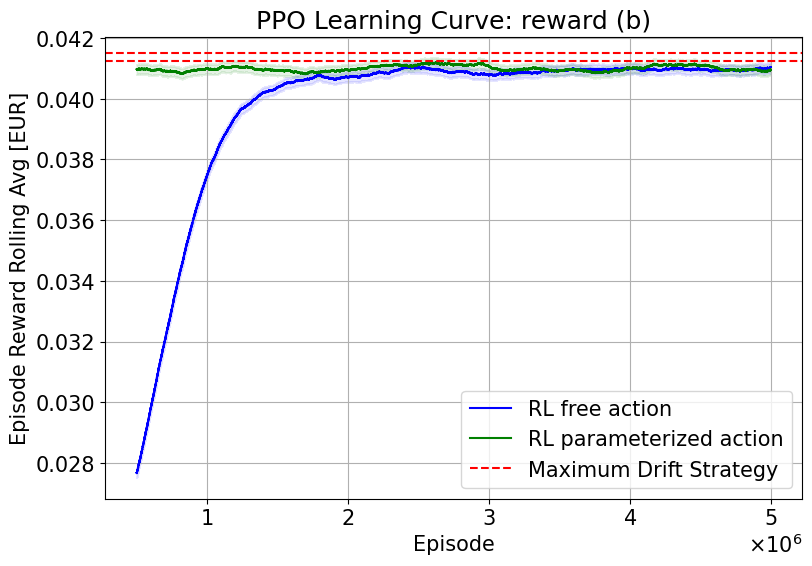}
	\caption{Learning curves of the PPO algorithm applied to the local volatility model. On the horizontal axis the number of training episodes. The solid lines are the moving average of the realized rewards on the last $10^5$ episodes. The shadows represent the $98\%$ confidence intervals. The label reward (a) indicates that the RL agent is trained in an environment where the reward function is defined by \Cref{eq:reward1}, while reward (b) defines the reward function in \Cref{eq:reward2}. The solid blue lines are the learning curves of an agent with free actions, while the green solid ones are the learning curves of an agent whose actions are parameterized with the maximum drift strategy. In both plots, the agents are parameterized by an FFNN with 5 hidden layers with 5 neurons each with tanh activation function. The horizontal red dashed lines are the 99\% confidence interval of the MC price obtained by maximizing path-wise the TVS drift through \Cref{eq:analytical_solution_free_strategy}.}\label{fig:PPO_results}
\end{figure}  
We report in \Cref{fig:PPO_results} the results of the learning curves. In all the PPO cases, the trained agents seem to be stuck in the sub-optimal maximum drift strategy policy. For the case of the immediate reward function (\Cref{fig:PPO_results} on the right), we have that the learning curves converge to a saturation value that is not compatible with the baseline price; this is due to the fact that with $\lambda,\gamma \neq 1$ we introduce a bias in the problem. Because of that, we need to run a test MC simulation with freezed agents and $\lambda=\gamma=1$ to obtain a result to compare with the baseline price. We report the results in \Cref{tab:mc_results_lv}. The Reinforcement Learning agents we have trained with PPO give all prices compatible with the baseline one.
\begin{table}[tb]
	\centering
	\begin{tabular}{|c|c|} \hline
		Method & TVO price [EUR]  \\ \hline 
		Baseline &  $(4.138 \pm 0.005)\times 10^{-2}$\\
		RL free (a) & $(4.127 \pm 0.005) \times 10^{-2}$ \\
		RL parameterized (a) & $(4.131 \pm 0.005) \times 10^{-2}$ \\
		RL free (b) & $(4.130 \pm 0.005) \times 10^{-2}$ \\
		RL parameterized (b) & $(4.135 \pm 0.005) \times 10^{-2}$ \\ \hline
	\end{tabular}\caption{Comparison of TVO prices under local volatility model. The baseline price is obtained by applying path-wise the maximum drift strategy \eqref{eq:analytical_solution_free_strategy}. The RL free agent implements the policy defined by \Cref{eq:action}, while the parameterized agent chooses its actions in terms of the baseline strategy. We label with letter (a) the agents trained in environments with the reward function \eqref{eq:reward1}, while with (b) the reward function \eqref{eq:reward2}. The option parameters are: $I_0=K=1\,[\text{EUR}]$, $T=2\,[\text{yr}]$ and $\bar{\sigma}=5\%$.}
	\label{tab:mc_results_lv}\end{table}

\section{Conclusion and Further Developments}
In this paper, we described a non-trivial control problem related to derivative contracts on target volatility strategies. In particular, we have considered a bank selling a call option to a fund manager as protection on the capital invested on the TVS. We showed how the presence of different funding costs coming from hedging the risky assets underlying the TVS, obliges the bank to solve a stochastic optimal control problem to price the protection. This is due to the fact that the bank strategy is not self-financing. This kind of control problem is hard to solve because here the control process affects both drift and diffusive coefficients of the controlled process. Despite its complexity, our first contribution is the derivation of a closed form solution of the control problem in a Black and Scholes framework, which could represent a useful tool for practitioners since it outperforms intuitive trading strategies. We have derived this solution in two different ways: first by applying the Gy\"ongy  Lemma and then by writing the Hamilton-Jacobi-Bellman equation.  We numerically studied the problem in the more general local volatility model where the solution is not available and thus an numerical investigation is needed. We tackled the problem by means of the novel Reinforcement Learning techniques, by both the direct policy learning and the proximal policy optimization one. We used the BS model, where the solution is \textit{a priori} known, as benchmark to perform a series of fine-tuning of the RL algorithm hyper-parameters, such as the artificial neural network architecture. We have tested in the LV model the two RL approaches and from our simulations we have evidence that nor the simple direct policy learning strategy nor the sophisticated PPO are able to outperform our analytical solution applied path-wise. Thus our analytical result for the Black and Scholes model seems to be a good proxy solution also for the local volatility one.

 This result seems to be a local optimum from the HJB equation of the problem, since in the LV model the volatility versor term should influence the RL agent actions. Thus natural development of this work could be to solve the HJB numerically in low dimension in order to check why such sophisticated algorithms are not able to find the global optimum of the problem, or to understand which are the key elements of the problem, such as market data or the payoff function, that can give rise to a solution far from the intuitive one.

\printbibliography
\end{document}